%% file: main.tex
\documentclass[sigconf,nonacm]{acmart}

\usepackage{enumitem}
\setlist[itemize]{leftmargin=*}
\setlist[enumerate]{leftmargin=*}
\usepackage{algorithmicx}
\usepackage{algpseudocode}

\usepackage{braket}
\usepackage{algorithm}
\usepackage{algpseudocode}
\floatname{algorithm}{Protocol}
\usepackage{amsmath,amsthm,amsfonts}
\usepackage{booktabs}
\usepackage{array}
\usepackage{makecell}
\usepackage{multirow}
\newcolumntype{C}[1]{>{\centering\arraybackslash}m{#1}}
\usepackage{mathtools}
\usepackage{thmtools}
\usepackage{thm-restate}
\usepackage{tabularx}
\usepackage{bm}
\usepackage[dvipsnames]{xcolor}
\newtheorem{theorem}{Theorem}
\newtheorem{definition}[theorem]{Definition}
\newtheorem{proposition}[theorem]{Proposition}
\newtheorem{lemma}[theorem]{Lemma}

\AtBeginDocument{%
  }

\title{OQRAM: Oblivious Quantum Random Access Memory for Securing Delegated Quantum Queries}

\author{Shifan Xu}
\authornote{Shifan Xu and Yizhuo Tan contributed equally to this work.}
\affiliation{%
  \institution{Yale University}
  \city{New Haven}
  \state{Connecticut}
  \country{USA}}
\email{shifan.xu@yale.edu}

\author{Yizhuo Tan}
\authornotemark[1]
\affiliation{%
  \institution{Yale University}
  \city{New Haven}
  \state{Connecticut}
  \country{USA}}
\email{yizhuo.tan@yale.edu}

\author{Yongshan Ding}
\affiliation{%
  \institution{Yale University}
  \city{New Haven}
  \state{Connecticut}
  \country{USA}}
\email{yongshan.ding@yale.edu}

\author{Jakub Szefer}
\affiliation{%
  \institution{Northwestern University}
  \city{Evanston}
  \state{Illinois}
  \country{USA}}
\email{jakub.szefer@northwestern.edu}

\renewcommand{\shortauthors}{Anonymous Author(s)}

\begin{abstract}
Quantum query is a basic subroutine in many quantum algorithms, and Quantum Random Access Memory (QRAM) provides a natural way to realize such coherent query access. In delegated settings, however, a standard QRAM query interface can expose sensitive information to the server. This paper introduces oblivious QRAM, a cryptographic abstraction for privacy-preserving delegated coherent query access. The protocol consists of an offline refresh phase and an online protected query phase. The database is stored in an encrypted and shuffled layout, and each query is protected by coherent address masking using either a quantum-secure pseudorandom permutation (qPRP) based method or a quantum one-time pad (qOTP) based method. In the adopted client model, the online protection adds only modest quantum overhead beyond the query register, avoiding the exponential quantum resources that would otherwise be required by an equivalent local QRAM construction. The qPRP-based variant also supports multi-query use by distributing database refresh across multiple queries to reduce classical communication. To address malicious servers, decoy checks are further incorporated to strengthen privacy protection and enable probabilistic tampering detection. Compared with fully blind quantum computing, this framework provides a lighter abstraction tailored to private delegated QRAM access, significantly reducing quantum resource requirements on both the client and server sides and achieving an exponential reduction in quantum communication.
\end{abstract}

\keywords{Oblivious quantum random access memory, delegated quantum query, privacy-preserving quantum protocols}

\begin{document}
\maketitle

\input{introduction}

\input{background}

\input{threatmodel}

\input{Protocol}

\input{variant}

\input{proof}

\input{results}

\input{conclusion}

\begin{acks}
We thank Ang Li and Zhixin Song for fruitful discussion. This paper was edited for grammar using ChatGPT. This material is based upon work supported by National Science Foundation (under awards CCF-2312754 and CCF-2332406). YD acknowledges partial support by NSF CCF-2338063, by the U.S. Department of Energy, Office of Science, National Quantum Information Science Research Center, Co-design Center for Quantum Advantage (C2QA) under Contract No. DE-SC0012704, by Quantum CT (under NSF Engines award ITE-2302908), by Air Force Office of Scientific Research MURI (FA9550-26-1-B036), by Boehringer Ingelheim, and NSF NQVL-ERASE (under award OSI-2435244), and by DARPA under award HR0011-26-9-E123. External interest disclosure: YD is a consultant and equity holder of D-Wave Quantum, Inc.

\end{acks}

\bibliographystyle{plain}
\bibliography{refs}

\appendix
\section{Open Science}
This paper is primarily a theoretical and analytical contribution. The proposed protocols, threat models, security arguments, and resource estimates are fully described in the paper. The submission does not include external artifacts such as code, datasets, benchmark suites, or executable evaluation scripts. No external artifact is necessary to evaluate the paper beyond the definitions, constructions, proofs, and asymptotic cost analyses included in the manuscript.

\section{Ethical Considerations}
This work is theoretical and does not involve human-subject data, private datasets, or deployed systems. The adversarial models are included to define privacy and integrity risks in delegated quantum query access and to motivate defensive protocol design. The paper does not provide an attack implementation; potential misuse is limited to conceptual insights, which are presented together with mitigations and leakage limits.

\input{appendix}

\end{document}

%% file: introduction.tex
\section{Introduction}
\label{sec:intro}

Outsourced memory access is a basic setting in classical cloud security: a client relies on a remote server to store or access a large dataset, while the access interface itself may reveal information even when the stored data are encrypted~\cite{goldreich1996software,stefanov2013path,islam2012access,kellaris2016generic}. Cloud quantum computation brings this outsourced-memory setting into the quantum regime, especially in scenarios where a limited-quantum-resource client delegates specialized subroutines to a remote server with substantially more powerful quantum hardware~\cite{childs2005secure,broadbent2009universal,fitzsimons2017private}. Among these subroutines, memory access is particularly important. Many quantum algorithms are naturally expressed in the query model, where the computational advantage comes from repeated oracle access rather than from a fully unrolled circuit description~\cite{beals2001quantum,grover1996fast,simon1997power,childs2003exponential}. In practice, a standard way to realize such coherent access to large classical data is Quantum Random Access Memory (QRAM), which allows a quantum processor to query a large outsourced dataset while keeping the query in quantum superposition~\cite{giovannetti2008quantum,giovannetti2008architectures,xu2023systems,xu2025fattree,wang2025transmon,weiss2024quantum,weiss2024faulty}. As a result, delegated QRAM is a natural primitive for cloud-based quantum computing.

From a security perspective, delegated QRAM should be viewed as an outsourced memory service with a much richer interface than ordinary RAM. In a classical setting, a memory query reveals a single concrete address unless additional mechanisms such as Oblivious RAM (ORAM) are used to hide the access pattern~\cite{goldreich1987oram,goldreich1996software,stefanov2013path}. In delegated QRAM, however, the query itself may be in a coherent superposition of addresses. This means that the “address” is no longer merely a single record index. Instead, it may encode sensitive algorithmic state in its amplitudes and phases, such as a search predicate, a hidden hypothesis, or algorithmic choices of the surrounding quantum algorithm, none of which would appear as a classical memory address. Consequently, even if the outsourced database may be encrypted, a standard delegated QRAM interface can still leak information through the query state itself.

This leakage is not specific to a poorly designed QRAM implementation. It is inherent in the usual delegated QRAM interface, which requires the client to hand a quantum address state to the server so that the server can route the query through its memory system quantumly. If the server can inspect, perturb, or correlate that state across rounds, then the privacy of the surrounding algorithm may be compromised at the query interface alone. This makes the delegated QRAM call itself a first-class attack surface.

Existing notions, while closely related, do not directly address this problem. Prior work differs along three main dimensions: whether the queried address is classical or quantum, whether the interface returns a single record or implements coherent unitary access, and whether the security goal is query privacy or full delegated-computation blindness.
Private information retrieval and quantum private query protocols typically hide which classical record is retrieved from a database, often with information-theoretic or cheat-sensitive guarantees, but they do not provide a mechanism for preserving an arbitrary coherent address superposition as the input to a delegated unitary QRAM operation~\cite{chor1998pir,kerenidis2004qspir,giovannetti2008quantum,jakobi2011private}.
Blind quantum computation (BQC) hides a much richer delegated computation, including the client’s input, circuit, and output, through interactive protocols based on tools such as randomized state preparation and hidden measurement angles~\cite{broadbent2009universal,fitzsimons2017private,li2023robust,bourdoncle2025practical}. This generality is powerful, but substantially heavier than needed for repeated memory access.
Classical ORAM hides access patterns over time by remapping, reshuffling, and re-encrypting outsourced storage, and ORAM-style techniques are useful for layout refresh and sparse classical updates. However, ORAM remains a classical hidden-access mechanism and does not by itself address a server that receives a coherent quantum address state~\cite{goldreich1996software,stefanov2013path,ren2015ring,mishra2018oblix,sasy2018zerotrace,asharov2020optorama,tople2018prooram,gagliardoni2017orams}.

These frameworks therefore leave a gap between classical hidden-access protocols and full blind quantum computation. This gap motivates a dedicated notion of secure quantum query, where the server executes a QRAM lookup without learning the client's logical address state and the client retains a lightweight query-oriented interface. This setting is especially relevant in the asymmetric regime considered in this work, where the client has only \(O(n)\) reliable qubits, while the server supports a QRAM of size \(N=2^n\) using \(O(N)\) hardware resources. In this regime, the client cannot simply replicate the memory system locally, yet the privacy of the address register may remain essential to the surrounding algorithm. The central question is therefore whether delegated unitary query access can be protected without paying the cost of full blind delegated quantum computation.

This work answers that question by introducing oblivious QRAM, a cryptographic framework for privacy-preserving delegated coherent query access. The main contributions are as follows.
\begin{itemize}
    \item A security abstraction for delegated quantum query clarifies the distinction between honest-but-curious privacy and malicious-server robustness.
    \item A baseline oblivious QRAM protocol, combining an encrypted shuffled database with coherent protection of each online quantum query.
    \item Extensions of the protocol, including a qOTP-based single query variant, two-round query variant, and decoy-based checks for probabilistic tampering detection.
    \item A resource and deployment analysis, covering online/offline quantum and classical resource estimation, refresh mechanisms, and reshuffling strategies.
\end{itemize}

The rest of the paper is organized as follows. Sec.~\ref{sec:background} reviews the QRAM query model and the cryptographic tools used in the construction. Sec.~\ref{sec:threatmodel} defines the system assumptions, adversarial assumptions, security goals, and leakage scope. Sec.~\ref{sec:protocol} presents the main oblivious QRAM protocol, including the offline refresh phase, the online protected query phase, and the arbitrary-bus extension. Sec.~\ref{sec:variants} describes protocol variants and malicious-server protections. Sec.~\ref{sec:security_proof} discusses the main security claims and proof intuition. Sec.~\ref{sec:results} analyzes resource costs and deployment options, including refresh and synchronization strategies. Finally, Sec.~\ref{sec:conclusion} concludes.

%% file: background.tex
\section{Background}
\label{sec:background}

\begin{figure}[t]
    \centering
    \includegraphics[width=\linewidth]{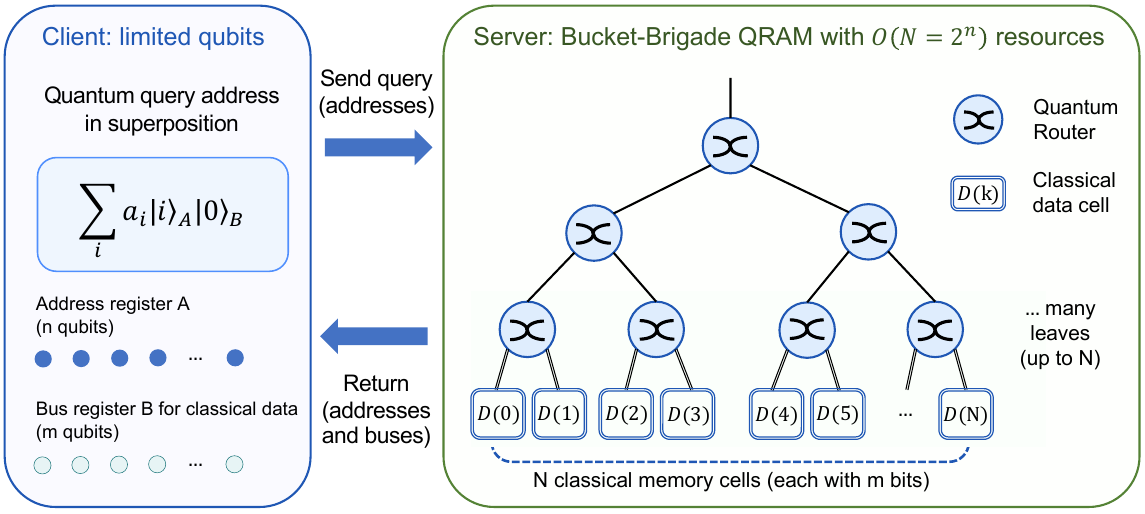}
    \caption{Delegated quantum query with a bucket-brigade QRAM server. A lightweight client prepares a coherent address state and sends the address register to a server that implements a bucket-brigade QRAM with \(O(N)\) memory and routing resources. The QRAM coherently routes the superposed address through a tree-like structure and returns both address and bus registers.}
    \label{fig:bg}
\end{figure}

\subsection{Quantum Query and QRAM}
\label{subsec:quantum-query}

Let \(D\) be a classical database indexed by \(n\)-bit addresses \(i\in\{0,1\}^n\), where each record \(D[i]\in\{0,1\}^m\) is loaded into an \(m\)-qubit bus register. The basic QRAM lookup on a single computational-basis address is similar to a classical query, written as the reversible block-value map
\begin{equation}
Q_D:\ |i\rangle_A|b\rangle_B \longmapsto |i\rangle_A|b\oplus D[i]\rangle_B,
\label{eq:classical-query}
\end{equation}
where \(i\in\{0,1\}^n\), \(b\in\{0,1\}^m\), and \(\oplus\) is bitwise XOR on the bus register. When \(B\) is initialized to \(|0^m\rangle_B\), this returns \(|D[i]\rangle_B\). A quantum query uses the linear extension of the same lookup when the address register is prepared in a superposition of addresses. In particular, for an arbitrary address state
\(|\psi\rangle_A=\sum_i \alpha_i |i\rangle_A,\)
which may contain all addresses in the database, the QRAM returns
\begin{equation}
Q_D\left(\sum_i\alpha_i\ket{i}_A\ket{0^m}_B\right)
=
\sum_i\alpha_i\ket{i}_A\ket{D[i]}_B .
\label{eq:qram-superposition}
\end{equation}

Architecturally, QRAM is usually modeled as a large tree-like qubit array with over \(N=2^n\) qubits attached to a classical memory of size \(N\) at the leaves of that tree~\cite{giovannetti2008architectures,giovannetti2008quantum}. As shown in Figure~\ref{fig:bg}, the large qubit overhead of QRAM motivates the quantum resource asymmetry considered in this work. The client needs only the \(n\)-qubit address register and a small number of local work qubits, while the server provides the \(O(N)\)-scale memory and routing hardware required to realize the query.

\subsection{Quantum Secure Pseudorandom Permutations}
A quantum-secure pseudorandom permutation (qPRP)~\cite{luby1988prp,zhandry2025note} is a keyed family of efficiently computable permutations
\[
\{\pi_K:\{0,1\}^n\to\{0,1\}^n\}_{K\in\mathcal{K}},
\]
where \(K\in\mathcal{K}\) is the permutation key, \(\mathcal{K}\) is the key space, and \(n\) is the input length. Each permutation \(\pi_K\) induces a unitary action on computational-basis states:
\[
U_{\pi_K} |i\rangle = |\pi_K(i)\rangle, \qquad
U_{\pi_K}^{\dagger} |j\rangle = |\pi_K^{-1}(j)\rangle.
\]
A \emph{strong} quantum-secure pseudorandom permutation remains indistinguishable from a uniformly random permutation even to a quantum polynomial-time adversary with superposition access to both the permutation and its inverse~\cite{zhandry2025note}. This is the relevant notion for delegated QRAM because the server may process quantum query states coherently and may interact with both forward and inverse permutation structures across the protocol.

A natural construction of a qPRP is a balanced seven-round Feistel network over \(\{0,1\}^n\)~\cite{carolan2025compressed}. Writing the input as \(L\|R \in \{0,1\}^{n/2}\times\{0,1\}^{n/2}\), round \(i\) applies
\[
(L,R)\mapsto (R,\,L\oplus F_i(R)),
\]
where \(F_i:\{0,1\}^{n/2}\to\{0,1\}^{n/2}\) is the \(i\)-th keyed round function, and the full permutation is given by the composition of seven such rounds.

Importantly, there are two levels of abstraction in this instantiation. At the theorem level, the security analysis~\cite{carolan2025compressed} models the round functions \(F_i\) as independent ideal random functions and shows that the resulting seven-round balanced Feistel construction achieves strong qPRP security in the bidirectional quantum query model, with distinguishing advantage \(O(q^3 / N^{1/4})\) over a domain of size \(N\), where \(q\) is the total number of forward and inverse quantum queries. The formal qPRP security statement is used later in the protocol security discussion in Sec.~\ref{sec:security_proof}. At the implementation level, the idealized round functions must be instantiated with concrete, efficiently computable primitives, since truly random functions are only an abstraction used in the security proof. A natural candidate is to instantiate each \(F_i\) using a qPRF under a post-quantum assumption such as LWE, thereby yielding an efficiently invertible permutation. More broadly, prior works show that strong qPRPs can be constructed from qPRFs together with classical format-preserving encryption techniques, and hence from quantum-resistant one-way functions~\cite{zhandry2025note}.

\subsection{Quantum Secure Symmetric Encryption}

For outsourced databases, encryption is needed to protect the contents of the stored records against an untrusted quantum server~\cite{song2000encryptedsearch,curtmola2006sse}. To protect the confidentiality of stored data against a quantum adversary, one typically considers quantum indistinguishability under quantum chosen-plaintext attack, or qIND-qCPA security~\cite{boneh2013quantum,gagliardoni2016semantic}. Informally, this notion requires that no quantum polynomial-time adversary with quantum chosen-plaintext access can distinguish encryptions of chosen messages except with negligible advantage.

Prior work proposes a concrete qIND-qCPA secure construction~\cite{gagliardoni2016semantic}. Let $\Pi_{m+\tau}=(\mathcal{I},\Pi,\Pi^{-1})$ be a family of qPRPs acting on bit strings of length $m+\tau$, where $m$ is the plaintext block length and $\tau$ is a randomness expansion parameter. Key generation samples a permutation key $k\leftarrow \mathcal{I}(1^{m+\tau})$. To encrypt a plaintext block $i\in \{0,1\}^m$, the client samples a fresh string $r\leftarrow\{0,1\}^\tau$ uniformly at random and computes
\[
Enc_k(i) = \pi_k (i||r).
\]
Decryption inverts the permutation and discards the appended randomness:
\[
Dec_k(j) := (\pi_k^{-1}(j))_m,
\]
where $(\cdot)_m$ denotes the first $m$ bits. Formal definition is given later in Sec~\ref{sec:security_proof}.

\subsection{Quantum One Time Padding}
A standard information-theoretic method for hiding an unknown quantum state is the quantum one-time pad (QOTP)~\cite{ambainis2000private, nielsen2010quantum}. For an \(n\)-qubit register, one samples uniformly random bit strings \(\mathbf{x},\mathbf{z}\xleftarrow{\$}\{0,1\}^n\) and applies the Pauli mask
\(X^{\mathbf x}Z^{\mathbf z}
=
\bigotimes_{j=1}^n X^{x_j}Z^{z_j}.\)
Here, the \(X^{\mathbf x}\) component flips computational-basis labels, while the \(Z^{\mathbf z}\) component adds basis-dependent phases. For a basis state \(\ket{i}\), these actions take the form
\[
X^{\mathbf x}\ket{i}=\ket{i\oplus x},
\qquad
Z^{\mathbf z}\ket{i}=(-1)^{z\cdot i}\ket{i}.
\]
Together, these two components constitute the standard Pauli \(X/Z\) masking layer.

The key property of QOTP is that averaging over the random key \((x,z)\) completely removes all information about the input state~\cite{ambainis2000private}. For any \(n\)-qubit density operator \(\rho\),
\[
\frac{1}{4^n}\sum_{x,z\in\{0,1\}^n}
X^{\mathbf x}Z^{\mathbf z}\,\rho\,Z^{\mathbf z}X^{\mathbf x}
=
\frac{I}{2^n}.
\]
Thus, to any observer who does not know \((x,z)\), the encrypted state is described by the maximally mixed state, independent of \(\rho\). In this sense, the plaintext is information-theoretically hidden: no measurement on the ciphertext can reveal any information about the original state beyond what is already contained in \(I/2^n\).

Equivalently, QOTP maps the ciphertext to the same average state \(I/2^n\) as a Haar-random pure-state ensemble. Thus, to any observer without the key, it reveals no information about the input state. This complete-randomization viewpoint is often useful in interpreting QOTP. QOTP and related Pauli masking techniques are widely used in delegated and blind quantum computation to conceal client-side quantum data while preserving coherent processability~\cite{broadbent2009universal,broadbent2015delegating,fitzsimons2017private}.

%% file: threatmodel.tex
\section{System and Threat Model}
\label{sec:threatmodel}

\begin{figure}[t]
         \centering
         \includegraphics[width=\linewidth]{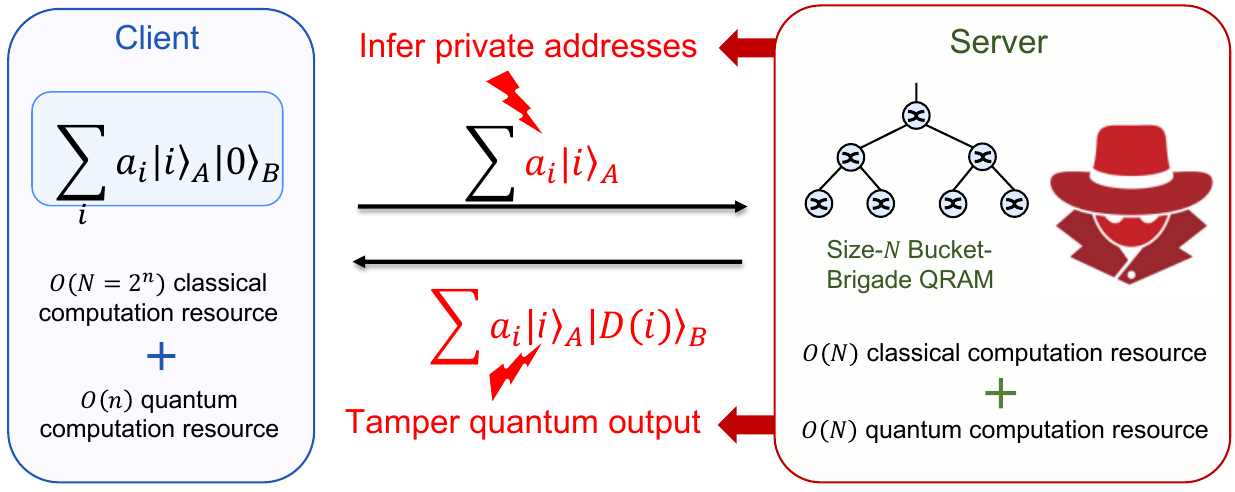}
         \caption{System model for secure quantum query. A trusted client outsources a protected database layout to an untrusted QRAM server, which stores the memory and performs quantum queries based on transmitted query address registers. The server may try to learn from the layout, query states, or interaction transcript; the goal is to hide the plaintext database and logical address state up to allowed public metadata.}
         \label{fig:threat_model}
\end{figure}

\subsection{System Model}

This work considers a client that initially owns a database \(D\), where each address \(i\in\{0,1\}^n\) indexes an \(m\)-bit record \(D[i]\in\{0,1\}^m\). The client outsources this database to an untrusted cloud server that provides a coherent QRAM interface for delegated memory access, namely
\[
\sum_i \alpha_i |i\rangle_A |0^m\rangle_B \longmapsto \sum_i \alpha_i |i\rangle_A |D[i]\rangle_B.
\]
The client does not upload \(D\) directly. Instead, before each refresh epoch, the client prepares a protected database consisting of encrypted records stored in a secret physical layout. The server stores this protected database and later serves coherent lookup queries over it.

Classical communication between the client and server is authenticated. Quantum registers are transmitted as part of the delegated QRAM interface; they are not hidden from the server, since the server necessarily acts on them. The protocol therefore protects the logical information of these quantum query states, rather than preventing the server from physically receiving quantum systems.

The security goal is to realize the above delegated coherent lookup functionality without revealing the client's logical address state \(\sum_i \alpha_i |i\rangle_A\) or the plaintext contents \(D\) of the outsourced database beyond public metadata such as database size and query timing.

\subsubsection{Client Capabilities}

The client is trusted. It owns the plaintext database at setup and refresh time, generates the secret cryptographic keys, samples local randomness, prepares the protected database layout, and synchronizes it with the server. During the online phase, the client performs only the local preprocessing and postprocessing required for each query. Its reliable quantum workspace consists of the address register, the data bus, and the ancillas used by the local protection circuit, for a total of \(O(n+m+anc)\) qubits, where \(anc\) denotes the protocol-implementation-dependent ancilla cost.
Classically, the client is assumed to be computationally efficient and capable of carrying out the key generation, permutation evaluation, and refresh procedures required by the protocol. Depending on the chosen refresh mechanism, the client may either rebuild the protected layout itself or rely on an additional secure reshuffling mechanism, as discussed in the deployment analysis.

\subsubsection{Server Capabilities}

The server is untrusted and provides the large memory and routing infrastructure needed to implement coherent QRAM over \(N=2^n\) logical addresses. The server stores the protected database but does not know the address-permutation key, the data-encryption key, the QOTP masks if used, the refresh randomness, or the locations of decoy queries.
For each online query, the server receives the protected query, coherently applies its QRAM lookup over the protected database, and returns the resulting registers to the client. The server has $O(N)$-scale storage and corresponding QRAM resources, including any ancillas required by its QRAM implementation.





\subsection{Adversarial Models}
Two adversarial models are considered.
\begin{itemize}
    \item \textbf{Honest-but-curious QRAM.} The server follows the prescribed offline storage interface and online QRAM execution, but attempts to infer information from everything it legitimately observes, including stored protected database, received online quantum query registers, refresh events, the quantum response registers before they are returned, and the classical communication transcript. The security goal of this model is privacy and confidentiality, which is targeted by the baseline oblivious QRAM protocol. The server should not learn the client’s logical address state or the plaintext database contents beyond allowed leakage.
    \item \textbf{Malicious QRAM.} The server is a quantum polynomial-time adversary that may deviate arbitrarily from the prescribed QRAM lookup. It may use polynomially many qubits of workspace to measure or partially measure query registers, entangle them with private workspace, perturb amplitudes or phases, return an incorrect ciphertext block, access the wrong physical cell, correlate behavior across rounds, or attempt to identify which queries are decoys. In this stronger model, the baseline protocol alone does not guarantee correct behavior. Malicious QRAM robustness is instead strengthened through decoy queries, which provide probabilistic detection of selected deviations such as invasive probing, inconsistent responses, coherence breaking, and output tampering.
\end{itemize}

\subsection{Leakage Profile}
The protocol does not attempt to hide all side information. The server may learn public parameters such as the database size \(N\), the number of queries, refresh timing \(t\), and whether an abort eventually occurs. If rejection is used for decoy failures, the client may delay or aggregate rejection decisions so that the timing of accept or reject behavior does not reveal decoy locations. Beyond this allowed leakage, the server should not learn the semantic meaning of logical addresses, the plaintext database, or the cross-epoch correspondence between logical and physical addresses, unless explicitly stated otherwise.

\subsection{Security Goals}
The protocol targets the following guarantees.
\begin{itemize}
    \item \textbf{Coherent Query Correctness.} For an honest server, the protected protocol should implement the same logical functionality as the ideal QRAM oracle \(O_D:\ket{i}_A\ket{b}_B \mapsto \ket{i}_A\ket{b\oplus D[i]}_B\), while preserving superpositions over \(i\).
    \item \textbf{Address privacy.} For an input query state \(\ket{\psi}_A=\sum_i \alpha_i\ket{i}_A\), the server should not learn the logical address labels or the amplitude/phase structure of \(\ket{\psi}_A\) beyond allowed leakage.
    \item \textbf{Data confidentiality.} The server stores encrypted database blocks and should learn no more about the plaintext contents than the allowed leakage profile.
    \item \textbf{Cheat sensitivity.} In the malicious model, active probing, wrong lookup, output tampering, or coherence-breaking behavior should be caught with nonzero probability by hidden tests. 
\end{itemize}

\subsection{Out-of-scope Guarantees}
This work targets privacy for delegated coherent QRAM access, not full blindness for an arbitrary delegated quantum computation. The server is allowed to know that it is providing a QRAM query service, the public database size, the refresh schedule, and the timing and number of query interactions, as specified in the leakage profile. What the protocol aims to hide is the plaintext database contents and the client's logical address state within each protected query. The protocol also does not by itself provide full verifiability of the server's quantum operation. In the malicious setting, decoy queries provide cheat sensitivity against certain active deviations, such as wrong lookup, output tampering, or coherence-breaking behavior, but they do not certify that the entire QRAM operation was implemented correctly. Physical leakage, timing channels beyond the stated leakage profile, denial-of-service attacks, and implementation-level attacks on the client's trusted device are outside the baseline model.


%% file: Protocol.tex
\section{Oblivious QRAM Protocol}
\label{sec:protocol}

\begin{figure*}[t]
         \centering
         \includegraphics[width=\linewidth]{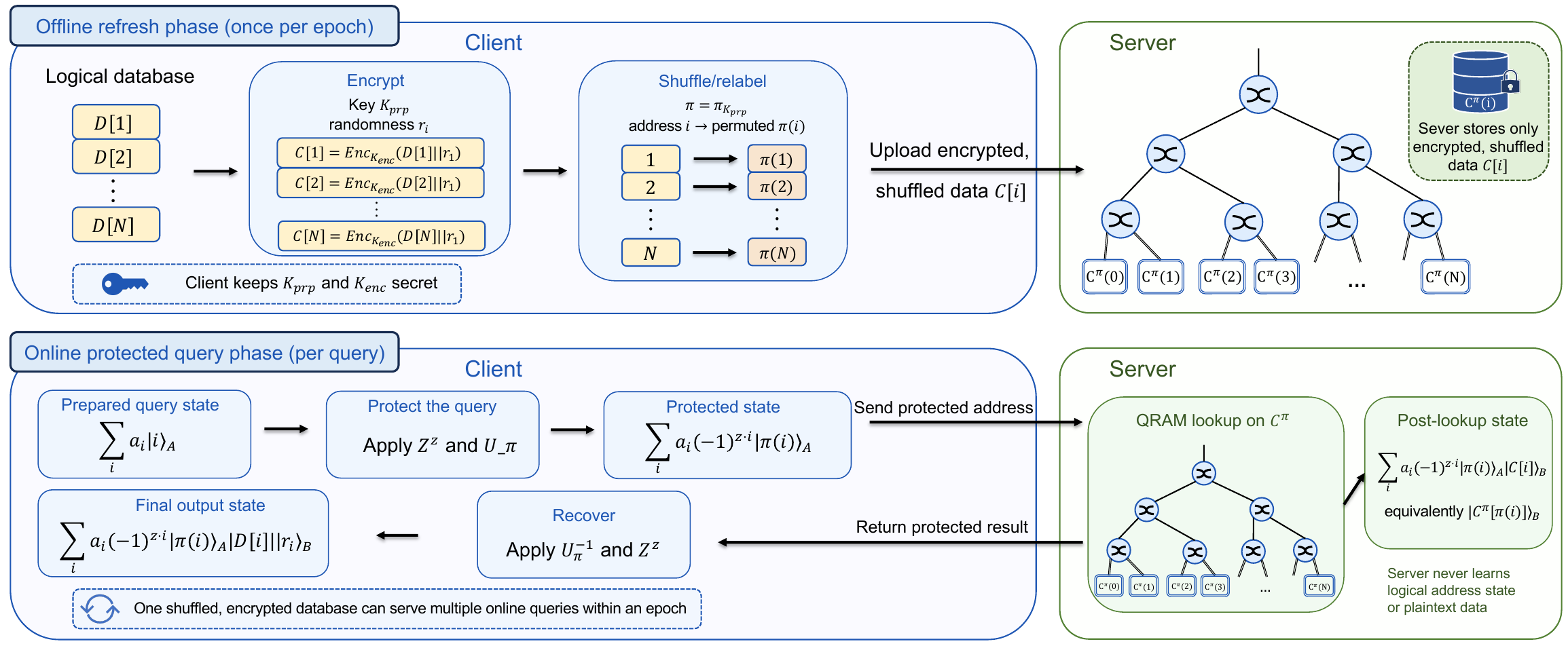}
         \caption{Overview of the qPRP-based OQRAM protocol. In the offline refresh phase, the client encrypts each database block, applies a secret qPRP-induced relabeling of logical addresses to physical QRAM locations, and uploads only the resulting protected layout \(C^\pi\) to the server. In the online phase, the client coherently maps an input address superposition into the protected physical namespace using \(Z^z\) and \(U_\pi\), the server performs a QRAM lookup over \(C^\pi\), and the client applies the inverse protection and decryption operations to recover the logical query output. Throughout the protocol, the server sees only protected addresses and encrypted, shuffled data.}
         \label{fig:fat}
\end{figure*}
This section presents the main oblivious QRAM protocol. Sec.~\ref{subsec:notation} first introduces the notation used throughout the construction. Sec.~\ref{subsec:overview} provides a high-level overview of the protocol. Sec.~\ref{subsec:offline-refresh} describes the offline refresh phase, where the client encrypts and shuffles the database. Sec.~\ref{subsec:online-query} then presents the online protected query phase, where the client masks the input query, invokes the server-side QRAM lookup, and coherently recovers the logical output. Sec.~\ref{subsec:two-round} discusses the extension to the general QRAM functionality with an arbitrary input bus register.

\subsection{Functionality and Notation}
\label{subsec:notation}

Let the address space client wants to query be \(\{0,1\}^n\) with \(N=2^n\), and let \(D:\{0,1\}^n\to\{0,1\}^m\) denote the database, where \(D[i]\) is the \(m\)-bit plaintext stored at original address \(i\).
Let \(\mathrm{Enc}_{K_{\mathrm{enc}}}:\{0,1\}^m\times\mathcal{R}\to\{0,1\}^{m_c}\) be a symmetric encryption scheme with key \(K_{\mathrm{enc}}\), $\tau$-bit randomness space \(\mathcal{R}\), and ciphertext length \(m_c = m+\tau\). For independently sampled randomness \(r_i\in\mathcal{R}\), define the encrypted database as \(C[i]:=\mathrm{Enc}_{K_{\mathrm{enc}}}(D[i];r_i)\). The corresponding query operation with the encrypted database is
\[
O_C:\ket{i}_A\ket{0^{m_c}}_B \mapsto \ket{i}_A\ket{C[i]}_B.
\]
Let \(\pi=\pi_{K_{\mathrm{prp}}}\) be a keyed permutation on \(\{0,1\}^n\), and let \(U_\pi\ket{i}=\ket{\pi(i)}\) denote its coherent action on the address register. Here, \(i\) denotes an original address, while \(j=\pi(i)\) denotes the corresponding physical address in the server-side permuted layout. The permuted encrypted layout is defined by \(C^\pi[j]:=C[\pi^{-1}(j)]\), equivalently \(C^\pi[\pi(i)]=C[i]\) for every \(i\in\{0,1\}^n\). The corresponding physical lookup oracle is
\[
O_{C^\pi}:\ket{j}_A\ket{0^{m_c}}_B \mapsto \ket{j}_A\ket{C^\pi[j]}_B.
\]

\subsection{Protocol Overview}
\label{subsec:overview}
At a high level, the Oblivious QRAM protocol proceeds in two phases, as summarized in Protocol~\ref{prot:oqram}. In the offline refresh phase, the client hides the dataset by reshuffling the database, encrypting the data blocks, and uploading only the resulting shuffled, protected database to the server. In the online phase, the client applies local quantum operations to map its address register into the same hidden physical namespace, allowing the server to answer the query without seeing the original logical addresses. The baseline construction realizes this hidden mapping through qPRP-based permutation and qIND-qCPA encryption, which supports multiple queries over a single uploaded database layout.

\begin{algorithm}[t]
\caption{\textsc{OQRAM}: qPRP-based protocol}
\label{prot:oqram}
\begin{algorithmic}[1]
\Statex \textbf{Offline shuffling phase (Shuffle \(D\)):}
\State Sample fresh \(K_{\mathrm{prp}}\), \(K_{\mathrm{enc}}\), and set \(\pi=\pi_{K_{\mathrm{prp}}}\).
\ForAll{\(i\in\{0,1\}^n\)}
    \State Sample \(r_i\leftarrow\{0,1\}^{\tau}\).
    \State Compute \(C[i]\gets \pi'_{K_{\mathrm{enc}}}(D[i]\|r_i)\).
    \State Set \(C^\pi[\pi(i)]\gets C[i]\).
\EndFor
\State Upload \(C^\pi\) to the server.

\Statex \textbf{Online protected-query phase (Query \(|\psi\rangle_A\)):}
\State Prepare \(|\psi\rangle_A=\sum_i\alpha_i|i\rangle_A\).
\State Sample \(z\leftarrow\{0,1\}^n\) and apply \(U_\pi Z^z\) to \(A\).
\State Send address register \(A\) to the server.
\State Server applies \(O_{C^\pi}:|j\rangle_A|0^{m+\tau}\rangle_B\mapsto |j\rangle_A|C^\pi[j]\rangle_B\).
\State Server returns \(A,B\) to the client.
\State Apply \(Z^zU_\pi^\dagger\) to recover the logical address basis.
\State Apply \((\pi'_{K_{\mathrm{enc}}})^{-1}\) to \(B\), obtaining \(\sum_i\alpha_i|i\rangle_A|D[i]\|r_i\rangle_B\).
\end{algorithmic}
\end{algorithm}

\subsection{Offline Refresh Phase}
\label{subsec:offline-refresh}
During the offline refresh phase, the client converts the logical database \(D=\{D[i]\}_{i\in\{0,1\}^n}\) into a protected one \(C^\pi\). An epoch denotes the interval during which the server uses one fixed protected database layout. In the qPRP-based baseline construction, each logical record is first encrypted with fresh block randomness, and the resulting ciphertexts are then placed according to a secret address permutation.

The qPRP-based baseline refresh procedure is as follows.

\begin{enumerate}
    \item The client samples a fresh qPRP key set \(K_{\mathrm{prp}}\), which defines the secret address permutation
    \[
        \pi=\pi_{K_{\mathrm{prp}}}:\{0,1\}^n\rightarrow\{0,1\}^n .
    \]
    \item The client samples a fresh encryption key set \(K_{\mathrm{enc}}\), determining qPRP-based qIND-qCPA encryption map
    \[
        \pi'_{K_{\mathrm{enc}}}:\{0,1\}^{m+\tau}
        \rightarrow
        \{0,1\}^{m+\tau}.
    \]
    \item For each logical address $i\in\{0,1\}^n$, the client samples independent fresh block randomness $r_i\xrightarrow{\$}\{0,1\}^\tau$.
    \item The client computes the ciphertext for each address
    \[C[i]=\mathrm{Enc}_{K_{\mathrm{enc}}}(D[i];r_i)=\pi'_{K_{\mathrm{enc}}}(D[i]||r_i).\]
    \item The client evaluates $\pi$ and constructs the protected address-data mapping by reshuffling the encrypted database to its permuted physical address:
    \[C^\pi[i]=C[\pi^{-1}(i)].\]
    Equivalently, \(C^\pi[\pi(i)]=C[i]\) for every logical address $i$.
    \item The client uploads the resulting protected database \(C^\pi\) to the server, and the server thereafter serves QRAM lookups over this fresh database throughout the current epoch.
\end{enumerate}

The resulting layout satisfies \(C^\pi[\pi(i)] = C[i]\) for every logical address \(i\). Thus, a QRAM lookup at physical address \(\pi(i)\) returns the correct ciphertext corresponding to logical address \(i\). This invariant is the link between the offline refresh phase and the online protected query phase. During the online query, the client coherently maps a logical address state into the hidden physical namespace defined by \(\pi\).

The two key sets have separate roles. \(K_{\mathrm{prp}}\) is the secret key for address-hiding qPRP, which induces a permutation on the $n$-bit logical address space and determines where each encrypted block is placed in physical storage. The key $K_{\mathrm{enc}}$ is the secret key for the qIND-qCPA secure encryption layer, which is realized through a qPRP-based map acting on $|D[i]|+\tau = m+\tau$ bits and is independent of the address permutation $\pi$. The per-block randomness $r_i$ ensures that repeated encryptions of the same plaintext yield independent ciphertexts. Thus, \(K_{\mathrm{prp}}\) determines the hidden physical locations, while \(K_{\mathrm{enc}}\) determines the ciphertext values.

When a new epoch begins, the client refreshes both the address permutation and the ciphertext representation. Refreshing only the permutation while leaving ciphertext values unchanged may allow the server to link cells across snapshots by equality of ciphertexts. Therefore, an epoch refresh re-randomizes both the physical placement and the encrypted block values.

A full refresh is not required after every query in the qPRP-based construction. Instead, one protected database layout may be reused for multiple online queries within a single epoch. The number of queries served per epoch is treated as a security parameter and is bounded by the multi-query security of the underlying qPRP and encryption scheme. This refresh interval trades offline communication and rebuild cost against stronger cross-query and cross-epoch unlinkability.

The protocol only requires the server to receive a database layout satisfying the above invariant. Different mechanisms for synchronizing \(C^\pi\) are discussed in Sec.~\ref{sec:results}.

\subsection{Online Masked Query}
\label{subsec:online-query}

This subsection describes the online query procedure executed over the protected database $C^\pi$ generated during the offline refresh phase. At a high level, this online phase enables the client to perform a logical QRAM lookup while ensuring that the server operates only on the protected representation of both the query and the database layout. The client coherently maps a logical address state into the hidden physical namespace defined by the secret permutation $\pi$, sends the protected query to the server. The server then evaluates QRAM access on the protected input and returns the resulting superposed ciphertexts, which the client decodes to recover the desired logical response. Throughout the online phase, the server interacts only with protected queries and data representations.

Let the client's logical address state be \(\ket{\psi}_A=\sum_i\alpha_i\ket{i}_A\), where $A$ is the $n$-qubit address register. The target logical functionality is
\[
\ket{\psi}_A \ket{0^m}_B \;\longmapsto\; \sum_i \alpha_i \ket{i}_A \ket{D[i]}_B.
\]
However, the protected database layout contains ciphertext blocks of length $m+\tau$, so the online ciphertext bus is an $m+\tau$-qubit register $B$. The server-side protected QRAM lookup is therefore modeled as the reversible oracle
\[
    U_{C^\pi}:\ket{i}_A\ket{b_i}_B
    \longmapsto
    \ket{i}_A\ket{b_i\oplus C^\pi[i]}_B .
\]
When $B$ is initialized to $\ket{0^{m+\tau}}$, this oracle loads the ciphertext stored at physical address $j$.

The qPRP-based online query proceeds as follows.

\begin{enumerate}
    \item The client prepares the original, logical address state $A$
    \[
    \ket{\psi}_A = \sum_i \alpha_i \ket{i}_A.
    \]
    
    \item The client additionally samples a fresh string \(z\in\{0,1\}^n\) and applies
    \(Z^z := \bigotimes_{k=1}^n Z^{z_k}\)
    before the qPRP mask. The client then applies the qPRP address permutation $\pi=\pi_{K_\mathrm{prp}}$ coherently to the address register, obtaining
    \[
    \ket{\widetilde{\psi}}_A = U_\pi Z^z \ket{\psi}_A
    = \sum_i \alpha_i (-1)^{z\cdot i} \ket{\pi(i)}_A.
    \]
    The client then sends the protected query $\ket{\widetilde{\psi}}_A$ to the server.

    \item The server executes the QRAM lookup over the protected database $C^\pi$ with an empty ciphertext bus state $ \ket{0^{m+\tau}}_B$
    \[
    U_{C^\pi}:\ket{j}_A\ket{0^{m+\tau}}_B \mapsto \ket{j}_A\ket{C^\pi[j]}_B.
    \]
    Thus, on the protected query, the server obtains the corresponding ciphertexts
    \[
    U_{C^\pi}\bigl(\ket{\widetilde{\psi}}_A\ket{0^{m+\tau}}_B\bigr)
    =
    \sum_i \alpha_i (-1)^{z\cdot i} \ket{\pi(i)}_A \ket{C^\pi[\pi(i)]}_B.
    \]
    
    \item The server returns the output to the client. Since $C^\pi[\pi(i)]=C[i]$ for every original address $i$, the resulting state is equivalently
    \[
    \sum_i \alpha_i (-1)^{z\cdot i} \ket{\pi(i)}_A \ket{C[i]}_B.
    \]

    \item The client applies \(Z^z U_\pi^\dagger\) to remove the address shuffling and phase mask, obtaining
\[
    \sum_i \alpha_i \ket{i}_A \ket{C[i]}_B .
\]

    \item The client applies the inverse encryption permutation \((\pi'_{K_{\mathrm{enc}}})^{-1}\) coherently to the ciphertext bus
    \((\pi'_{K_{\mathrm{enc}}})^{-1}\ket{C[i]}_B
        =
        \ket{D[i]\|r_i}_B .
    \)
        
    The one-round protected query therefore produces
    \[
        \sum_i \alpha_i \ket{i}_A\ket{D[i]\|r_i}_B .
    \]
\end{enumerate}

The final bus contains both the \(m\)-qubit plaintext block and the \(\tau\)-qubit randomness suffix. Although the values \(r_i\) are sampled independently of the plaintext records, once the protected layout is fixed they become branch-dependent labels correlated with the address \(i\). Therefore, the randomness suffix cannot simply be traced out if a coherent QRAM output is required. Indeed, for
\[
    \ket{\Psi}_{ADR}
    =
    \sum_i \alpha_i \ket{i}_A\ket{D[i]}_D\ket{r_i}_R ,
\]
tracing out \(R\) gives
\[
    \rho_{AD}
    =
    \sum_{i,j}
    \alpha_i\alpha_j^*
    \langle r_j|r_i\rangle
    \ket{i,D[i]}\bra{j,D[j]} .
\]
Thus, coherence between branches \(i\) and \(j\) in the reduced \(AD\) state is preserved only when \(r_i=r_j\). Since independently sampled \(r_i\)'s and \(r_j\)'s are distinct with high probability for large \(\tau\), discarding the randomness register can dephase the address-data superposition.

Consequently, the one-round online query should be interpreted as coherent recovery of the expanded decrypted block \(\ket{D[i]\|r_i}\), rather than as a clean realization of the ideal plaintext oracle alone. The randomness suffix does not by itself destroy coherence as long as it remains part of the coherent quantum state and is never measured, discarded, reset, or otherwise coupled to the environment. In this sense, the client may still use the plaintext portion coherently, for example by applying reversible computation controlled on, or acting on, the first \(m\) qubits of \(B\), while leaving the branch-dependent suffix as untouched garbage until the end of the computation.

However, algorithms that require a clean logical query oracle (e.g., recycling the unused qubits) cannot in general ignore this garbage. A clean realization of
\(
    O_D:\ket{i}_A\ket{b_i}_M
    \longmapsto
    \ket{i}_A\ket{b_i\oplus D[i]}_M
\)
requires extracting the \(m\)-qubit plaintext into a separate algorithmic bus and then uncomputing the ciphertext/randomness workspace. This query-use-unquery extension is described next in Sec.~\ref{subsec:two-round}.

\subsection{Two-round Protocol for Arbitrary Bus State by Uncomputing the Bus Register}
\label{subsec:two-round}

The one-round protocols above were presented in the standard retrieval form, where the query bus is initialized to a clean zero state before the protected QRAM access. In some applications, however, the client queries QRAM as part of a larger coherent computation and therefore requires a reversible oracle that acts on an existing algorithmic bus register. The desired logical functionality is
\[
O_D:\ket{i}_A\ket{b_i}_{M} \mapsto \ket{i}_A\ket{b_i\oplus D[i]}_{M},
\] 
while restoring any temporary QRAM workspace to $\ket{0^{m+\tau}}_B$. Here $A$ is the $n$-qubit address register and $M$ is the $m$-qubit algorithmic bus register kept locally by the client for its own computation, whose state is independent of the protected QRAM procedure.

In the protected setting, the server stores an encrypted protected layout $C^\star$, where $C^\star$ denotes either $C^\pi$ in the qPRP-based variant or $C^\mathbf{x}$ in the qOTP-based variant introduced later in Sec.~\ref{subsec:qotp-based}. Let $\pi^\star$ denote the corresponding protected address map. The server-side QRAM oracle acts on a protected address register $A$ and a temporary $(m+\tau)$-qubit ciphertext bus $B$ in reversible XOR-loading form:
\[
O_{C^\star}:\ket{j}_A\ket{b_j}_B \longmapsto \ket{j}_A\ket{b_j\oplus C^\star[j]}_B .
\]

Because this oracle satisfies \(O_{C^\star}^2=I\), the same protected QRAM access can be used once to load the ciphertext block into \(B\) and a second time to erase it. Between these two calls, the client removes the address protection, decrypts the expanded block \(D[i]\|r_i\), CNOTs the plaintext portion into the local algorithmic bus \(M\), re-encrypts the temporary bus, and reapplies the address protection. This is the standard query-use-unquery pattern applied to the protected QRAM oracle, rather than a fundamentally different primitive.

Consequently, starting from
\(
\sum_i \alpha_i \ket{i}_A \ket{0^{m+\tau}}_B \ket{b_i}_{M}
\),
the two-round wrapper implements
\[
\sum_i \alpha_i \ket{i}_A \ket{0^{m+\tau}}_B \ket{b_i}_{M}
\;\longmapsto\;
\sum_i \alpha_i \ket{i}_A \ket{0^{m+\tau}}_B \ket{b_i\oplus D[i]}_{M},
\]
with the ciphertext bus used only as temporary workspace. The detailed step-by-step state evolution is given in Appendix~\ref{subsec:two-round-detailed}.

%% file: variant.tex
\section{Protocol Variants and Extensions}
\label{sec:variants}
The baseline qPRP-based construction supports protected QRAM queries across a refresh epoch. This section presents two extensions: a qOTP-based single-query masking variant with lower online quantum cost but per-query refresh in Sec.~\ref{subsec:qotp-based}, and decoy queries that provide cheat-sensitive detection of malicious server deviations in Sec.~\ref{subsec:decoys}.

\subsection{QOTP-Based Single-Query Masking}
\label{subsec:qotp-based}
The qOTP-based variant is the single-query analogue of the qPRP baseline. It keeps the same encrypted-block layer, server-side QRAM lookup, and client-side ciphertext recovery, but replaces the coherent qPRP address permutation with a fresh quantum one-time pad on the address register~\cite{ambainis2000private,nielsen2010quantum}:
\[
X^{\mathbf{x}}Z^{\mathbf{z}}
=
\bigotimes_{k=1}^n X^{x_k}Z^{z_k},
\qquad
\mathbf{x},\mathbf{z}\xleftarrow{\$}\{0,1\}^n .
\]
The main structural difference is the \(X\)-mask. It shifts basis addresses by \(i\mapsto i\oplus\mathbf{x}\), so the offline layout is shifted rather than permuted:
\[
C^{\mathbf{x}}[j]=C[j\oplus \mathbf{x}],
\qquad
C^{\mathbf{x}}[i\oplus \mathbf{x}]=C[i].
\]
The \(Z\)-mask contributes only phases to the address superposition and does not affect classical storage locations.

Online query phase matches the qPRP online phase, replacing \(U_\pi Z^z\) with \(X^{\mathbf{x}}Z^{\mathbf{z}}\) and querying the shifted layout \(C^{\mathbf{x}}\). For an input state \(\sum_i\alpha_i\ket{i}_A\), the server receives
\(
\sum_i \alpha_i (-1)^{\mathbf{z}\cdot i}\ket{i\oplus\mathbf{x}}_A
\),
loads the corresponding ciphertext \(C^{\mathbf{x}}[i\oplus\mathbf{x}]=C[i]\), and returns the protected address-ciphertext state. The client then removes the Pauli mask, decrypts \(C[i]\) coherently, and proceeds exactly as in the qPRP-based protocol.

Unlike the qPRP baseline, this masking does not give a reusable pseudorandom relabeling of the full classical address space. The induced map is only an XOR shift, so reusing the same \((\mathbf{x},\mathbf{z})\) and shifted layout across independent queries can allow cross-round correlations under a fixed hidden shift. The client therefore samples fresh \((\mathbf{x},\mathbf{z})\), re-encrypts the database blocks, and rebuilds the shifted layout after each independent query.

The detailed offline and online steps are given in Appendix~\ref{subsec:qotp-detailed}. The online quantum cost of the qOTP masking and unmasking is $O(n)$ single-qubit Pauli gates; the main cost is the per-query classical refresh.

\subsection{Decoy Queries for Malicious Server Detection}
\label{subsec:decoys}

The offline and online phases defined above protect query privacy and data confidentiality against an honest-but-curious server, but by themselves do not prevent a malicious server from deviating from the prescribed lookup or returning an incorrect output. To obtain a probabilistic honesty check, the client can interleave the protected query process with decoy queries~\cite{giovannetti2008privatequery}. Each decoy query is executed through exactly the same offline and online protection procedures as an ordinary one. The only difference is that the client chooses certain query instances for which the returned output can be checked. The decoy choice is made locally by the client and is not announced before the server returns the query result. Therefore, from the server’s perspective, a protected query should not reveal whether it is an ordinary request or a verification round.

For each query instance, the client samples a hidden decoy bit \(b_{decoy}\in\{0,1\}\) with
\(\Pr[b_{decoy}=1]=p_{\mathsf{decoy}}\).
If \(b_{decoy}=0\), the client prepares the intended logical query state. If \(b_{decoy}=1\), the client prepares a decoy query that is protected using the same oblivious QRAM procedures.

A simple decoy is a known-answer query. If the client knows the classical value \(D[i_{decoy}]\) for a selected basis address \(i_{decoy}\), then after recovery the client can measure the returned plaintext block and compare it with the expected value. 
This check can directly verify correctness of the returned value for the selected address with certainty, assuming an ideal measurement and a classical database block. However, it is limited to single-address queries and does not test the server’s action on an arbitrary address superposition.

The main quantum decoy considered here is an inversion check on the protected QRAM oracle that can be applied to arbitrary protected address state, including a superposition.

For a protected decoy address $\ket{\widetilde{\psi}_{\mathsf{decoy}}}_A = \sum_i \alpha_i (-1)^{z\cdot i}\ket{\pi^\star(i)}_A$, the first QRAM call gives $\sum_i \alpha_i (-1)^{z\cdot i}\ket{\pi^\star(i)}_A\ket{C^\star[\pi^\star(i)]}_B$.
The client keeps the returned ciphertext bus without decrypting it. After a delay, possibly with ordinary queries interleaved, the client sends the same protected address state together with the stored bus back to the server. If the server applies the same QRAM oracle consistently, the second call erases the bus
\[
O_{C^\star}(\sum_i \alpha_i (-1)^{z\cdot i}\ket{\pi^\star(i)}_A\ket{C^\star[\pi^\star(i)]}_B) =
\ket{\widetilde{\psi}_{\mathsf{decoy}}}_A\ket{0^{m+\tau}}_B.
\]
Equivalently, the check test $O_{C^\star}^2 = I$ on the selected protected state. The client accepts only if measuring the returned bus gives the all-zero outcome.

This test detects deviations that prevent the second call from uncomputing the bus, but does not certify plaintext correctness, since a consistently incorrect lookup may still pass. It can therefore be combined with known-answer decoys. The two calls must use the same protected database layout $C^\star$.
In the qOTP-based variant, they also use the same address mask and shifted layout. This does not violate the per-query refresh requirement because the two calls form one query–unquery verification pair on the same address superposition, not two independent queries. A fresh qOTP mask and shifted layout are still required before any independent query.

Any failed decoy check causes rejection. To avoid revealing which rounds were decoys, the client may delay announcing the comparison outcome or defer rejection until after a batch of queries. This prevents the server from immediately learning the decoy locations from the timing of rejection. The detection bound is analyzed in detail in Sec.~\ref{subsec:decoy_bound}.

%% file: proof.tex
\section{Security Discussion}
\label{sec:security_proof}
This section states the security assumptions used by the OQRAM construction in Sec.~\ref{sec:protocol} and then analyzes the honest-but-curious privacy guarantee in Sec.~\ref{subsec:hbc-privacy}. The proof separates address privacy from database confidentiality. Address privacy follows from phase padding, secret relabeling, and the strong quantum-security of the qPRP, while database confidentiality follows from the qIND-qCPA security of the encryption layer. Malicious server detection is handled separately by the decoy mechanism in Sec.~\ref{subsec:decoys}.

\subsection{Correctness}
As already explained in Sec.~\ref{subsec:online-query}, correctness follows the database layout invariant
\[
    C^{\star}[\pi^\star(i)]
    =
    C[i]
    =
    Enc_{K_{\mathrm{enc}}}(D[i];r_i)
    =
    \pi'_{K_{\mathrm{enc}}}(D[i]\|r_i).
\]
That is, after the client applies the coherent address hiding $U_{\pi^\star}$, a QRAM lookup at address $\pi^\star(i)$ returns the ciphertext associated with the original logical address $i$. The client then applies $U_{\pi^\star}^{-1}$ to restore the logical address and coherently decrypts the returned ciphertext by applying \((\pi'_{K_{\mathrm{enc}}})^{-1}\). Therefore, for any query superposition \(\sum_i \alpha_i |i\rangle_A |0\rangle_B\), an honest execution implements
\[
    \sum_i \alpha_i |i\rangle_A |0\rangle_B
    \mapsto
    \sum_i \alpha_i |i\rangle_A |D[i]\rangle_B
\]
up to the auxiliary encryption randomness register, which can be optionally uncomputed as described in Sec.~\ref{subsec:two-round}.

\subsection{Cryptographic Assumptions}
\label{subsec:crypto-assumptions}
The constructions of the address permutation and encryption layer were introduced in the background section. This section then only states the security guarantees used later in the proof.

\begin{definition}[Strong quantum-secure PRP]
\label{def:sqprp}
Let \(\Pi=\{\pi_K\}_K\) be a keyed permutation family over \(\{0,1\}^{\ell}\). For a quantum adversary \(\mathcal{A}\) with quantum oracle access to both the forward and inverse maps, define
\[
\operatorname{Adv}^{\mathsf{sqPRP}}_{\Pi}(\mathcal{A})
=
\left|
\Pr_K[\mathcal{A}^{\pi_K,\pi_K^{-1}}(1^\lambda)=1]
-
\Pr_P[\mathcal{A}^{P,P^{-1}}(1^\lambda)=1]
\right|,
\]
where \(P\) is a uniformly random permutation over \(\{0,1\}^{\ell}\), and $\lambda$ is the security parameter. The family \(\Pi\) is a strong qPRP if this advantage is negligible for every QPT adversary making polynomially many bidirectional quantum queries.
\end{definition}

\begin{theorem}[Seven-round Feistel qPRP]
\label{thm:feistel-sqprp}
The seven-round balanced Feistel construction instantiated with quantum-secure round functions is a strong qPRP~\cite{zhandry2025note}.
Equivalently, for every adversary making \(q\) forward/inverse quantum queries,
\[
\operatorname{Adv}^{\mathsf{sqPRP}}_{\Pi^{(7)}}(q,\lambda)
\leq
\varepsilon_{\mathsf{prp}}(q,\lambda),
\]
where \(\varepsilon_{\mathsf{prp}}\) is the distinguishing bound from the seven-round Feistel theorem.
\end{theorem}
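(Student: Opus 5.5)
This statement is essentially imported from the literature, so the plan is a two-step hybrid reduction that lifts the idealized-round-function result to concrete quantum-secure round functions. Any stated bound must be attributed to the idealized theorem (the $O(q^3/N^{1/4})$ bound recalled in the background, \cite{carolan2025compressed}) plus a PRF-switching term.

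\textbf{Hybrid $H_0$.} The adversary $\mathcal{A}$ gets bidirectional quantum oracle access to $\Pi^{(7)}$ instantiated with round functions $F_i=f_{k_i}$ drawn from a qPRF family (e.g.\ LWE-based), using independent keys $k_1,\dots,k_7$.

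\textbf{Hybrids $H_1,\dots,H_7$.} In $H_t$, the first $t$ round functions are replaced by independent truly random functions $\{0,1\}^{\ell/2}\to\{0,1\}^{\ell/2}$.

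Consecutive hybrids are indistinguishable by a reduction $\mathcal{B}_t$ to qPRF security.
\begin{itemize}
\item $\mathcal{B}_t$ gets superposition oracle access to either $f_{k_t}$ or a random function, and uses it as round $t$.
\item It simulates rounds $<t$ as random functions, using Zhandry's $2q$-wise independent functions so that this is efficient and perfectly indistinguishable against $q$ queries.
\item It samples the keys for rounds $>t$ itself.
\item Each forward or inverse query is answered by computing the seven Feistel rounds in superposition. The inverse is obtained by running the rounds backwards, since $(L,R)\mapsto (R\oplus F_i(L),L)$ needs only forward evaluations of $F_i$.
\item Each quantum query of $\mathcal{A}$ therefore costs $\mathcal{B}_t$ at most two queries to its oracle: compute, then uncompute the garbage.
\end{itemize}
This gives $|\Pr[H_{t-1}]-\Pr[H_t]|\le \mathrm{Adv}^{\mathsf{qPRF}}(\mathcal{B}_t)$, where $\mathcal{B}_t$ makes $2q$ queries.

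\textbf{Final hybrid.} $H_7$ is exactly the idealized seven-round Feistel network. We then invoke the idealized theorem directly: the distance from $(P,P^{-1})$ for a uniform permutation $P$ is $O(q^3/N^{1/4})$ with $N=2^{\ell}$. Combining the hybrids yields
\[
\varepsilon_{\mathsf{prp}}(q,\lambda)\le 7\,\mathrm{Adv}^{\mathsf{qPRF}}(2q,\lambda)+O(q^3/2^{\ell/4}),
\]
which is negligible for polynomial $q$ and $\ell=\mathrm{poly}(\lambda)$. This gives both the strong-qPRP claim and the quantitative form in the statement.

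The step I expect to be the main obstacle is the idealized seven-round bound itself. It requires compressed-oracle or path-recording techniques for inverse quantum queries and cannot be reproduced in a plan of this length, so I would cite it as a black box rather than re-derive it. The only subtlety on our side is that the simulation of inverse queries must be a clean unitary with its garbage uncomputed. Otherwise the hybrid reductions fail, because leftover entanglement with the round-function evaluations would be visible to $\mathcal{A}$.
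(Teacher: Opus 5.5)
Your proposal is correct and matches the argument the paper relies on without writing it out. The paper gives no proof beyond citations, and its Background splits the claim exactly as you do: Carolan's seven-round theorem for ideal random round functions (advantage $O(q^3/N^{1/4})$), plus an instantiation of the round functions by qPRFs. Your seven qPRF-to-random-function hybrids are precisely the step the paper leaves implicit. Since the Background describes the Zhandry note cited in the statement as the format-preserving-encryption route to strong qPRPs, your route through Carolan's theorem is the one that actually supports the Feistel-specific claim.

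One bookkeeping fix: by your own count (compute, then uncompute), each round function that $\mathcal{B}_t$ simulates for rounds $<t$ is queried up to $2q$ times. Zhandry's result needs $2Q$-wise independence for $Q$ queries, so a perfect simulation requires $4q$-wise rather than $2q$-wise independent functions.
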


The protocol uses this qPRP guarantee for both the address permutation \(\pi_{K_{\mathrm{prp}}}\) over $n$-bit address and the encryption permutation \(\pi'_{K_{\mathrm{enc}}}\) over ($m+\tau$)-bit data blocks.

\begin{definition}[qIND-qCPA security]
\label{def:qind-qcpa}
A symmetric-key encryption scheme is qIND-qCPA secure~\cite{boneh2013quantum} if every QPT adversary with quantum chosen-plaintext access wins the qIND challenge game with probability at most
\(
    \frac12+\operatorname{negl}(\lambda)
\).
Equivalently, encryptions of equal-length challenge plaintext states are computationally indistinguishable to any such adversary.
\end{definition}

\begin{theorem}[qPRP-based qIND-qCPA encryption]
\label{thm:qprp-qind}
Let \(\Pi_{m+\tau}\) be a qPRP over \(\{0,1\}^{m+\tau}\). The encryption scheme
\[
    \mathsf{Enc}_{K_{\mathrm{enc}}}(D;r)
    =
    \pi'_{K_{\mathrm{enc}}}(D\|r),
    \qquad
    r \xleftarrow{\$}\{0,1\}^{\tau},
\]
with decryption given by applying \((\pi'_{K_{\mathrm{enc}}})^{-1}\) and outputting the first \(m\) bits, is qIND-qCPA secure~\cite{boneh2013quantum}.
Thus, for every QPT adversary \(\mathcal{A}\),
\[
\operatorname{Adv}^{\mathsf{qIND\text{-}qCPA}}_{\mathsf{Enc}}(\mathcal{A})
\leq
\varepsilon_{\mathsf{enc}}(\lambda),
\]
for negligible \(\varepsilon_{\mathsf{enc}}\).
\end{theorem}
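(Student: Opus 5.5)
The plan is to follow the standard ``randomized PRP implies CPA-secure encryption'' template of Boneh--Zhandry and Gagliardoni et al., lifted to quantum chosen-plaintext access. It proceeds by a hybrid over the key material. Let \(\mathsf{H}_0\) be the real qIND-qCPA game. In it the adversary \(\mathcal{A}\) gets a quantum encryption oracle \(\ket{D}\ket{y}\mapsto\ket{D}\ket{y\oplus \pi'_{K_{\mathrm{enc}}}(D\|r)}\), with fresh classical \(r\) per query. It also gets a challenge encryption of one of two equal-length plaintext states.

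\textbf{Step 1 (PRP to random permutation).} In hybrid \(\mathsf{H}_1\), replace \(\pi'_{K_{\mathrm{enc}}}\) by a uniformly random permutation \(P\) on \(\{0,1\}^{m+\tau}\). Any distinguisher between \(\mathsf{H}_0\) and \(\mathsf{H}_1\) yields a qPRP adversary \(\mathcal{B}\). \(\mathcal{B}\) samples each \(r\) itself, answers encryption queries by calling its forward oracle on \(\ket{D\|r}\), and embeds the challenge bit. Forward access suffices here, and Theorem~\ref{thm:feistel-sqprp} gives the strong version anyway. So
\[
|\Pr[\mathsf{H}_0]-\Pr[\mathsf{H}_1]|\le \varepsilon_{\mathsf{prp}}(q+1,\lambda).
\]

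\textbf{Step 2 (random permutation to random function).} In hybrid \(\mathsf{H}_2\), replace \(P\) by a uniformly random function \(F\). By Zhandry's quantum PRP/PRF switching lemma, the difference is \(O(q^3/2^{m+\tau})\), which is negligible for \(\tau=\omega(\log\lambda)\).

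\textbf{Step 3 (information-theoretic argument).} In \(\mathsf{H}_2\), the challenge is computed as \(F(D\|r^*)\) with fresh \(r^*\). This is the main obstacle, and I would treat it as follows.
\begin{itemize}
\item If \(r^*\) differs from every \(r\) used in the at most \(q\) encryption queries, then the restriction of \(F\) to inputs with suffix \(r^*\) is a random function independent of the adversary's view so far.
\item The challenge plaintext state is then encrypted under a function on the \(m\)-bit prefix that is uniformly random and never otherwise queried. The ciphertext register is therefore maximally mixed, independent of the challenge bit, provided the challenge is returned only as \(\ket{D}\ket{F(D\|r^*)}\) with the plaintext register handled per the qIND formalization.
\item Collisions \(r^*=r_j\) occur with probability at most \(q/2^\tau\).
\item Encryption queries made after the challenge use fresh \(r\)'s, and by the same collision bound they almost surely avoid \(r^*\).
\end{itemize}
The delicate part is making the ``independent random function on an unqueried slice'' statement rigorous when the challenge itself is a quantum superposition over plaintexts. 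I would first follow the Boneh--Zhandry proof, which handles exactly this by a classical-randomness argument on \(r\). If the challenge model is Gagliardoni's stronger qIND, I would instead restrict to the classical-challenge qIND notion, citing their result for the randomized-PRP construction.

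Summing the three contributions gives
\[
\varepsilon_{\mathsf{enc}}(\lambda)\le \varepsilon_{\mathsf{prp}}(q+1,\lambda)+O(q^3/2^{m+\tau})+2q/2^\tau,
\]
which is negligible.
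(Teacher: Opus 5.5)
The paper gives no argument of its own for this theorem. It imports the result, attributing the construction and its qIND-qCPA security to \cite{gagliardoni2016semantic} in the background and the notion to \cite{boneh2013quantum}. Your hybrid chain is a sound, self-contained proof for the Boneh--Zhandry notion, in which the encryption oracle is quantum but the two challenge messages are classical. Once $r^*\notin\{r_j\}$, which holds with probability at least $1-q/2^\tau$ because the $r$'s are sampled by the challenger independently of everything else, the oracle answers depend only on $F$ restricted to the slices $\{r_j\}$. Hence $F(m_b\|r^*)$ is uniform and independent of $b$ and of the rest of the view. This classical-challenge version is also all that Lemma~\ref{lem:database-confidentiality} actually uses, since the layout encrypts classical blocks.

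The gap is the quantum-challenge case, which is what the statement asserts: Definition~\ref{def:qind-qcpa} speaks of ``encryptions of equal-length challenge plaintext states.'' Your Step~3 does not survive this case.
\begin{itemize}
\item If the challenge is answered as $\ket{D}\ket{y\oplus F(D\|r^*)}$ with the plaintext register returned, this is exactly the type-(1) quantum challenge that Boneh--Zhandry show to be unachievable for any scheme. So their proof does not ``handle exactly this''; it avoids it by making challenges classical.
\item The model in which this construction is secure encrypts the challenge with the minimal map $\ket{D}\mapsto\ket{\pi'_{K_{\mathrm{enc}}}(D\|r^*)}$, which consumes the plaintext register. That map is an isometry only because $\pi'_{K_{\mathrm{enc}}}(\cdot\|r^*)$ is injective. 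Your Step~2 therefore cannot be applied to the challenge: a random function restricted to the $2^m$ inputs of one slice is non-injective with probability about $\min\{1,2^{m-\tau-1}\}$.
\item In this model Step~1 also needs the inverse oracle, not just the forward one, to uncompute $D$.
\item Even in the ideal world the challenge ciphertext is not maximally mixed. Under a random injection, the encryption of $\ket{+^m}$ has squared overlap $2^{-\tau}$ with $\ket{+^{m+\tau}}$, versus $2^{-(m+\tau)}$ for $\ket{0^m}$, so you only get closeness.
\end{itemize}

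The repair is to skip the PRP/PRF switch entirely. Fix the $r$'s and condition on $P$ restricted to the at most $q$ queried slices. Then $P$ on the fresh slice $r^*$ is a uniformly random injection into a set $S$ with $|S|\ge 2^{m+\tau}-q2^m$. For any, possibly entangled, challenge register $\sum_D\ket{D}\ket{e_D}_E$, the averaged output equals $\frac{I_S}{|S|}\otimes\rho_E$ up to trace norm $2(2^m+1)/|S|=O(2^{-\tau})$, and this does not depend on $b$. The total bound becomes $\varepsilon_{\mathsf{prp}}+q/2^\tau+O(2^{-\tau})$. Your stated fallback, restricting to classical challenges or citing the Gagliardoni--H\"ulsing--Schaffner result, either proves a weaker theorem or reduces to the paper's bare citation.
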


Therefore, the proof may replace the real address permutation and encryption layer by ideal random objects with additive loss bounded by \(    \varepsilon_{\mathsf{addr}}(q_{\mathsf{addr}},\lambda) + N\cdot \varepsilon_{\mathsf{enc}}(\lambda)\), over an epoch containing one lookup and an $N=2^n$-block encrypted database layout.

\subsection{Honest-but-Curious Privacy}
\label{subsec:hbc-privacy}

\subsubsection{qPRP-based}

The honest-but-curious server follows the specified QRAM lookup but may retain all classical and quantum side information it receives. In this setting, the privacy guarantee has three layers. First, fresh \(Z\)-padding removes phase information from the transmitted address state. Second, if each epoch serves only a single query, namely every query uses a fresh permutation and the encrypted database layout is refreshed consistently, then the server's one-query view is information-theoretically independent of the client's input address state. Third, if the same hidden permutation of the database is reused across multiple queries in one epoch, then the guarantee becomes weaker: the server still does not learn which original addresses the observed physical labels correspond to, but it may learn relabeling-invariant statistics of the query ensemble, such as repeated support, overlap patterns, or diagonal weight distributions in the hidden basis.

Starting with the effect of \(Z\)-padding alone.

\begin{lemma}[\(Z\)-padding dephases the address register]
\label{lem:z-dephase}
For any \(n\)-qubit address state \(\rho\),
\[
\mathbb E_z[Z^z\rho Z^z]=\Delta(\rho),
\]
where \(\Delta(\rho)=\sum_x \rho_{x,x}|x\rangle\langle x|\).
\end{lemma}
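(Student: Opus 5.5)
The plan is to compute the average directly in the computational basis. Since the dephasing map and the average over \(z\) are both linear, it suffices to compute the action on each matrix unit \(|x\rangle\langle y|\) and then sum. Here the expectation is over \(z\) uniform on \(\{0,1\}^n\).

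\textbf{Step 1: entrywise action of the phase mask.} Expand \(\rho=\sum_{x,y}\rho_{x,y}|x\rangle\langle y|\). Using \(Z^z|x\rangle=(-1)^{z\cdot x}|x\rangle\) from the background section, and the fact that \(Z^z\) is Hermitian, we get
\[
Z^z\rho Z^z=\sum_{x,y}(-1)^{z\cdot(x\oplus y)}\rho_{x,y}|x\rangle\langle y|.
\]

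\textbf{Step 2: character sum.} The one ingredient with any content is the standard identity
\[
\mathbb E_z[(-1)^{z\cdot w}]=\prod_{k=1}^n \tfrac12\bigl(1+(-1)^{w_k}\bigr)=\delta_{w,0^n}.
\]
It factorizes over the coordinates of \(z\) because the bits \(z_k\) are independent and uniform. Apply it with \(w=x\oplus y\). This kills every off-diagonal term, since \(x\neq y\) forces \(w\neq 0^n\). The diagonal terms survive unchanged, because \(w=0^n\) gives the factor \(1\).

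\textbf{Step 3: conclude.} Linearity of expectation now gives \(\mathbb E_z[Z^z\rho Z^z]=\sum_x\rho_{x,x}|x\rangle\langle x|=\Delta(\rho)\).

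I do not expect any real obstacle. The only point needing care is that the expectation is over the uniform distribution on \(z\); with a biased or non-uniform mask the off-diagonals would only be damped, not removed. I would also note, since it is useful later, that the lemma extends to a register \(A\) entangled with other systems \(E\): \(\mathbb E_z[(Z^z\otimes I_E)\rho_{AE}(Z^z\otimes I_E)]\) equals the completely dephased state \((\Delta\otimes \mathrm{id})(\rho_{AE})\), by the same entrywise computation on \(|x\rangle\langle y|\otimes M\).
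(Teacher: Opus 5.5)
Your proposal is correct and follows essentially the same route as the paper: expand \(\rho\) in the computational basis, note \(Z^z|x\rangle\langle y|Z^z=(-1)^{z\cdot(x\oplus y)}|x\rangle\langle y|\), and average over uniform \(z\) to kill the off-diagonal terms. Your explicit character-sum factorization and the side-information extension \((\Delta\otimes\mathrm{id})(\rho_{AE})\) are valid refinements; the paper states the first step informally and uses the second later in its single-query rerandomization lemma.
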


\begin{proof}
Write \(\rho=\sum_{x,y}\rho_{x,y}|x\rangle\langle y|\). For any \(z\),
\(
Z^z|x\rangle\langle y|Z^z=(-1)^{z\cdot(x\oplus y)}|x\rangle\langle y|
\).
If \(x=y\), the phase is always \(1\). If \(x\neq y\), averaging over uniform
\(z\) gives zero. Therefore, all off-diagonal terms vanish and the diagonal terms
remain unchanged.
\end{proof}

If the protocol uses a fresh independent permutation and fresh \(Z\)-padding for every query, and the database permutation and encryption are refreshed consistently, then the server's one-query view is maximally mixed.

\begin{lemma}[Single-query rerandomization]
\label{lem:single-query-rerandomization}
Let \(\pi\) be a uniformly random permutation on \(\{0,1\}^n\), let
\(z\leftarrow\{0,1\}^n\) be uniform, and let \(N=2^n\). For any address state
\(\rho_A\),
\[
\mathbb E_{\pi,z}\left[U_\pi Z^z\rho_A Z^zU_\pi^\dagger\right]=\frac{I_A}{N}.
\]
Moreover, for any joint state \(\rho_{AE}\),
\[
\mathbb E_{\pi,z}
\left[
(U_\pi Z^z\otimes I_E)\rho_{AE}(Z^zU_\pi^\dagger\otimes I_E)
\right]
=
\frac{I_A}{N}\otimes \rho_E .
\]
\end{lemma}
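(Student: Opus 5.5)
The plan is to do the averaging in two stages: first over \(z\) using Lemma~\ref{lem:z-dephase}, then over \(\pi\). The order matters because \(U_\pi\) is applied after \(Z^z\). Since \(\pi\) and \(z\) are independent, the expectation factorizes as
\[
\mathbb E_{\pi,z}\bigl[U_\pi Z^z\rho Z^z U_\pi^\dagger\bigr]=\mathbb E_\pi\bigl[U_\pi\,\mathbb E_z[Z^z\rho Z^z]\,U_\pi^\dagger\bigr].
\]
By Lemma~\ref{lem:z-dephase} the inner average equals \(\Delta(\rho)=\sum_x\rho_{x,x}\ket{x}\bra{x}\). Because \(U_\pi\) is a permutation matrix, it maps \(\ket{x}\bra{x}\) to \(\ket{\pi(x)}\bra{\pi(x)}\), so we are left with \(\sum_x\rho_{x,x}\,\mathbb E_\pi\bigl[\ket{\pi(x)}\bra{\pi(x)}\bigr]\).

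For a uniformly random permutation, \(\pi(x)\) is uniform on \(\{0,1\}^n\) for every fixed \(x\). Hence \(\mathbb E_\pi\bigl[\ket{\pi(x)}\bra{\pi(x)}\bigr]=I/N\). Summing with weights \(\rho_{x,x}\) and using \(\operatorname{tr}\rho=1\) gives \(I_A/N\), which is the first claim.

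For the joint-state claim, expand \(\rho_{AE}=\sum_{x,y}\ket{x}\bra{y}\otimes\sigma_{x,y}\) with \(\sigma_{x,y}=\bra{x}_A\rho_{AE}\ket{y}_A\), an operator on \(E\). The maps \((U_\pi Z^z\otimes I_E)(\cdot)(Z^zU_\pi^\dagger\otimes I_E)\) are linear and act only on the \(A\) factor. We therefore repeat the computation termwise on each \(\ket{x}\bra{y}\otimes\sigma_{x,y}\):
\begin{itemize}
\item For \(x\neq y\), the \(z\)-average multiplies the term by \(\mathbb E_z(-1)^{z\cdot(x\oplus y)}=0\), as in the proof of Lemma~\ref{lem:z-dephase}.
\item For \(x=y\), the \(\pi\)-average sends \(\ket{x}\bra{x}\) to \(I/N\).
\end{itemize}
This yields \(\frac{I_A}{N}\otimes\sum_x\sigma_{x,x}\), and \(\sum_x\sigma_{x,x}=\operatorname{tr}_A\rho_{AE}=\rho_E\).

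No step is a real obstacle. The one point needing care is that the order of operations is \(U_\pi Z^z\), with \(Z\) first. This is what makes the \(z\)-average act on the original computational basis and kill the off-diagonals before the permutation relabels the basis. It is also worth recording why both masks are needed:
\begin{itemize}
\item the \(\pi\)-average alone is a permutation twirl, which does not remove off-diagonal terms in general;
\item the \(Z\)-average alone leaves the diagonal weights \(\rho_{x,x}\) intact.
\end{itemize}
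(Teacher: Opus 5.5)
Your proof is correct and follows the paper's argument step for step: you dephase with Lemma~\ref{lem:z-dephase}, use that \(\pi(x)\) is uniform to send each \(\ket{x}\bra{x}\) to \(I/N\), and handle side information blockwise via \(\sum_x\sigma_{x,x}=\rho_E\). One small correction to your aside: the order \(U_\pi Z^z\) versus \(Z^z U_\pi\) does not matter here, because permutation matrices commute with computational-basis dephasing, i.e.\ \(\Delta(U_\pi\rho U_\pi^\dagger)=U_\pi\Delta(\rho)U_\pi^\dagger\).
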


\begin{proof}
The first identity follows by applying Lemma~\ref{lem:z-dephase} and then
averaging the resulting diagonal distribution over a uniformly random
permutation. Every basis state receives total weight \(1/N\). For the statement
with side information, expand
\(\rho_{AE}=\sum_{x,y}|x\rangle\langle y|\otimes R_{x,y}\). The \(Z\)-average
removes all \(x\neq y\) blocks, and the permutation average maps
\(\sum_x |x\rangle\langle x|\otimes R_{x,x}\) to
\(\frac{I_A}{N}\otimes \sum_x R_{x,x}\), which is
\(\frac{I_A}{N}\otimes\rho_E\).
\end{proof}

However, the claim above doesn't apply to qOTP-variant.
\begin{lemma}[Vulnerability of reused one-time padding]
\label{lem:qotp-refresh-vulnerability}
Let \(x\in\{0,1\}^n\) be fixed across \(T\) queries, while each
\(z_t\leftarrow\{0,1\}^n\) is sampled independently. For address states
\(\rho_A^{(1)},\ldots,\rho_A^{(T)}\), define
\[
\mathcal E_x^{(T)}
\left(\rho_A^{(1)},\ldots,\rho_A^{(T)}\right)
=
\mathbb E_{z_1,\ldots,z_T}
\bigotimes_{t=1}^T
X^x Z^{z_t}\rho_A^{(t)} Z^{z_t}X^x .
\]
Then
\[
\mathcal E_x^{(T)}
\left(\rho_A^{(1)},\ldots,\rho_A^{(T)}\right)
=
\bigotimes_{t=1}^T
X^x \Delta(\rho_A^{(t)})X^x ,
\]
where
\(\Delta(\rho)=\sum_i |i\rangle\langle i|\rho|i\rangle\langle i|\)
is dephasing in the computational basis. Hence fresh \(Z\)-masks remove
phase coherence, but a reused \(X\)-mask preserves a fixed relabeling of
the diagonal address distributions across queries.
\end{lemma}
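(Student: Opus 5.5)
The plan is to reduce the statement to Lemma~\ref{lem:z-dephase}, applied independently to each tensor factor, and then conjugate by the fixed \(X^x\).

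First, since \(z_1,\ldots,z_T\) are independent, the expectation of a tensor product of functions of distinct independent variables factorizes:
\[
\mathbb E_{z_1,\ldots,z_T}\bigotimes_{t=1}^T f_t(z_t)=\bigotimes_{t=1}^T \mathbb E_{z_t} f_t(z_t).
\]
This holds because the expectation is a finite uniform average over \(\{0,1\}^{nT}\). Multilinearity of \(\otimes\) lets the product sum \(\sum_{z_1,\ldots,z_T}\) be distributed across the factors. Here \(f_t(z_t)=X^xZ^{z_t}\rho_A^{(t)}Z^{z_t}X^x\).

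Second, because \(x\) is fixed and not averaged over, conjugation by \(X^x\) is a fixed linear map that commutes with the expectation. For each \(t\) this gives
\[
\mathbb E_{z_t}\bigl[X^xZ^{z_t}\rho_A^{(t)}Z^{z_t}X^x\bigr]
= X^x\,\mathbb E_{z_t}\bigl[Z^{z_t}\rho_A^{(t)}Z^{z_t}\bigr]\,X^x .
\]
Lemma~\ref{lem:z-dephase} then reduces the inner average to \(\Delta(\rho_A^{(t)})\). The \(\Delta\) defined via \(\sum_i|i\rangle\langle i|\rho|i\rangle\langle i|\) is the same map as \(\sum_x\rho_{x,x}|x\rangle\langle x|\) in that lemma. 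Combining the factors yields the claimed identity.

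For the interpretive clause, compute \(X^x\Delta(\rho)X^x=\sum_i \rho_{i,i}|i\oplus x\rangle\langle i\oplus x|\). This is the diagonal distribution relabeled by the same bijection \(i\mapsto i\oplus x\) in every factor. Consequently, any relabeling-invariant relation between the \(T\) diagonal distributions survives the masking. Examples are equality of supports, coincidence of peak locations, and XOR differences \((i\oplus x)\oplus(i'\oplus x)=i\oplus i'\) between observed labels. This contrasts with the maximally mixed output of Lemma~\ref{lem:single-query-rerandomization}, where the permutation is freshly randomized.

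I do not expect a real obstacle, since the argument is a direct calculation. The only step needing care is the factorization of the joint expectation over independent \(z_t\). The key point is that \(x\) is \emph{not} averaged: averaging \(x\) jointly across factors would not factorize and would leave correlations. Keeping \(x\) fixed is exactly what the statement assumes and what makes the leakage explicit.
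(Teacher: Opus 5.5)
Your proposal is correct and follows essentially the same route as the paper: apply the \(Z\)-dephasing lemma to each factor, then note that the fixed \(X^x\) only relabels basis states by \(i\mapsto i\oplus x\). You are more explicit than the paper about why the joint average over independent \(z_t\) factorizes. The paper instead illustrates the leakage with a concrete two-query example: \(\rho_A^{(1)}=\rho_A^{(2)}=|0\rangle\langle 0|\) yields perfectly correlated outcomes. Your XOR-difference remark captures that correlation in general.
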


\begin{proof}
For each query \(t\), averaging over \(z_t\) removes all off-diagonal
blocks in the computational basis:
\(
\mathbb E_{z_t}\left[Z^{z_t}\rho_A^{(t)}Z^{z_t}\right]
=
\Delta(\rho_A^{(t)})
\).
The fixed \(X^x\) mask then only relabels basis states by
\(i\mapsto i\oplus x\), giving the stated expression. This is not equivalent
to independent single-query rerandomization. For example, if \(x\) is sampled
once and then reused for two queries with
\(\rho_A^{(1)}=\rho_A^{(2)}=|0\rangle\langle 0|\), the server's averaged
two-query state is
\(
\frac{1}{N}\sum_x |x\rangle\langle x|\otimes |x\rangle\langle x|
\),
so measuring both protected addresses gives the same outcome with
probability \(1\). Under two independently rerandomized queries, the state
would be \(I_A/N\otimes I_A/N\), and the same test succeeds with probability
only \(1/N\). Thus reusing the \(X\)-mask leaks cross-query correlations,
which is why the qOTP-based layout must be refreshed after each query.
\end{proof}

\begin{lemma}[Computational address hiding]
\label{lem:computational-address-hiding}
If \(\pi_{K_{\mathrm{prp}}}\) is a strong qPRP, then the real address-side view using \(\pi_{K_{\mathrm{prp}}}\) is indistinguishable from the view using a uniformly random permutation, except with advantage
\(
    \varepsilon_{\mathsf{addr}}(q_{\mathsf{addr}},\lambda)
\).
\end{lemma}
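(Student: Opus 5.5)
We prove Lemma~\ref{lem:computational-address-hiding} by a standard one-step hybrid reduction to Definition~\ref{def:sqprp}. Let $H_0$ be the real experiment: the client samples $K_{\mathrm{prp}}$, builds $C^\pi$ with $\pi=\pi_{K_{\mathrm{prp}}}$, and runs the online phase. The server (or any distinguisher $\mathcal{D}$) receives the protected layout, the transmitted registers $U_\pi Z^z\rho_A Z^z U_\pi^\dagger$ (jointly with any reference system $E$), and the returned registers. Let $H_1$ be the same experiment with $\pi$ replaced by a uniformly random permutation $P$. The goal is to show $|\Pr[\mathcal{D}(H_0)=1]-\Pr[\mathcal{D}(H_1)=1]|\le \varepsilon_{\mathsf{addr}}(q_{\mathsf{addr}},\lambda)$. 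Here $\varepsilon_{\mathsf{addr}}$ is the strong-qPRP advantage of Theorem~\ref{thm:feistel-sqprp} at the number $q_{\mathsf{addr}}$ of oracle calls needed to simulate the view.

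\textbf{Reduction.} From $\mathcal{D}$ we build an adversary $\mathcal{B}$ with bidirectional quantum oracle access to $\mathcal{O},\mathcal{O}^{-1}$, where $\mathcal{O}$ is either $\pi_K$ or $P$. $\mathcal{B}$ plays the client and proceeds as follows.
\begin{enumerate}
\item It samples $K_{\mathrm{enc}}$, the randomness $r_i$, and $z$ itself, since these are independent of the address key.
\item It builds the layout by classical queries $C^{\mathcal{O}}[\mathcal{O}(i)]=C[i]$, which are a special case of quantum queries.
\item It implements $U_{\mathcal{O}}Z^z$ on the client's address register by one coherent forward query in the standard in-place form. This uses a query to $\mathcal{O}$ into a fresh register, then uncomputation of the input with a query to $\mathcal{O}^{-1}$; this step is why the \emph{strong} notion is needed.
\item It hands the result to $\mathcal{D}$ as the server.
\item If the view includes the client's postprocessing, it applies $Z^zU_{\mathcal{O}}^\dagger$ the same way.
\end{enumerate}
When $\mathcal{O}=\pi_K$ the simulated view is exactly $H_0$, and when $\mathcal{O}=P$ it is exactly $H_1$. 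So $\mathcal{B}$'s advantage equals $\mathcal{D}$'s, and Theorem~\ref{thm:feistel-sqprp} bounds it.

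\textbf{Query count.} The count is $q_{\mathsf{addr}}=O(N+Q)$, where $Q$ is the number of online queries in the epoch. The $N$ term comes from building the layout; the online phase adds a constant number of calls per query.

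\textbf{Main obstacle.} The layout construction costs $N$ classical queries, while the qPRP bound of Theorem~\ref{thm:feistel-sqprp} is only meaningful for $q$ polynomial in $\lambda$. I would handle this in one of two ways. The first is to note that the entire truth table of $\mathcal{O}$ is fixed by the layout: $\mathcal{B}$ can equivalently receive the full permutation description, and the strong-qPRP advantage is stated for the concrete query budget $q_{\mathsf{addr}}$ used in the protocol. The second is to parametrize $\varepsilon_{\mathsf{addr}}$ explicitly by $q_{\mathsf{addr}}$, as the lemma statement already does, and leave its negligibility to the regime where $N$ is polynomial or the layout is generated by an oracle-efficient mechanism. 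I would state the lemma in the second, explicit-parameter form. This also matches the additive loss $\varepsilon_{\mathsf{addr}}(q_{\mathsf{addr}},\lambda)$ quoted after Theorem~\ref{thm:qprp-qind}.

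Once $\pi$ has been replaced by $P$, Lemma~\ref{lem:single-query-rerandomization} applies in the one-query case. It shows that the address register in $H_1$ is $I_A/N\otimes\rho_E$.
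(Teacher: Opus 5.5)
Your step 3 is the paper's argument. The paper's proof is a one-line reduction in which the distinguisher uses its qPRP oracle to generate the protected address registers. Your in-place evaluation $|i\rangle|0\rangle\mapsto|i\rangle|\mathcal{O}(i)\rangle\mapsto|0\rangle|\mathcal{O}(i)\rangle$ is a good explanation of why the strong, bidirectional notion is needed.

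The gap is step 2. Having $\mathcal{B}$ build $C^{\mathcal{O}}$ costs $N$ oracle calls, so $\mathcal{B}$ obtains the full codebook of a permutation whose domain has size exactly $N$. The final appeal to Theorem~\ref{thm:feistel-sqprp} then gives nothing:
\begin{itemize}
\item The Feistel bound $O(q^3/N^{1/4})$ already exceeds $1$ once $q\ge N^{1/12}$. This is why the paper refreshes after $t<N^{1/12}$ queries.
\item This is not mere looseness of the bound. Every balanced Feistel permutation with half-width at least two bits is even. A full-codebook adversary therefore distinguishes the paper's instantiation from a random permutation with advantage $1/2$ just by computing the parity.
\end{itemize}
So neither of your fixes works. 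Handing $\mathcal{B}$ the full permutation description is exactly the regime in which the qPRP is broken. The explicit-parameter version proves a bound $\varepsilon_{\mathsf{addr}}(O(N+Q),\lambda)\ge 1/2$, which makes the lemma vacuous. The ``$N$ polynomial'' regime is precisely where the parity attack runs efficiently.

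The missing idea is that the qPRP reduction must never need $\pi$ on the whole domain. The paper sidesteps this by stating the lemma for the address-side view only. There the oracle is used just for the online registers, so $q_{\mathsf{addr}}=O(Q)$, and the layout is left to the encryption hybrid.

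If you want the joint view including the layout, reorder the hybrids. (The $H_0\to H_1$ step of Theorem~\ref{thm:hbc-prp-single-query} does rely on this joint view.)
\begin{enumerate}
\item First replace the database by encryptions of $0^m$ with fresh randomness. This reduction samples $K_{\mathrm{prp}}$ itself and needs no qPRP oracle. Its $N$ encryption-oracle calls are harmless, because the encryption domain $2^{m+\tau}$ can be made much larger than $N$.
\item The $N$ dummy ciphertexts are i.i.d.\ given $K_{\mathrm{enc}}$. Hence the permuted layout has the same distribution for every $\pi$ and is independent of $\pi$.
\item $\mathcal{B}$ can then sample the layout without any oracle call, and spends only $O(1)$ calls per online query.
\end{enumerate}

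The same caveat applies to your closing sentence. In $H_1$ the address register is $I_A/N\otimes\rho_E$ relative to a reference system $E$. It is not product with the real layout, which still depends on $P$ until the dummy-encryption hybrid has been applied.
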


\begin{proof}
Any server distinguishing these two experiments gives a qPRP distinguisher by using its oracle to generate the protected address registers.
\end{proof}

\begin{lemma}[Database confidentiality]
    \label{lem:database-confidentiality}
    If \(Enc_{K_{\mathrm{enc}}}\) is qIND-qCPA secure, then the encrypted layout \(C^\pi\) is indistinguishable from an encrypted dummy layout \(D^\star[i]=0^m\), except with advantage at most
    \(
    N\cdot\varepsilon_{\mathsf{enc}}(\lambda).
    \)
\end{lemma}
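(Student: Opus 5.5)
The plan is a standard hybrid argument over the $N$ ciphertext blocks, with a reduction to the qIND-qCPA game of Theorem~\ref{thm:qprp-qind} at each step. The permutation $\pi$ and the relabeling $C^\pi[\pi(i)] = C[i]$ are sampled independently of $K_{\mathrm{enc}}$. They are therefore public operations that any reduction can simulate itself. So it suffices to show that the unpermuted vector $C=(C[i])_{i\in\{0,1\}^n}$, with $C[i]=\pi'_{K_{\mathrm{enc}}}(D[i]\|r_i)$, is indistinguishable from $C^\star=(\pi'_{K_{\mathrm{enc}}}(0^m\|r_i))_i$. Applying the same $\pi$ to both vectors preserves indistinguishability, and any distinguisher for the permuted layouts composes with the reduction.

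First I fix an ordering $i_1,\dots,i_N$ of $\{0,1\}^n$ and define hybrids $H_0,\dots,H_N$. In $H_k$ the first $k$ blocks encrypt $0^m$, the remaining blocks encrypt $D[i]$, and every block uses fresh randomness $r_i$ under the same key $K_{\mathrm{enc}}$. Then $H_0$ is the real layout $C^\pi$ and $H_N$ is the dummy layout. By the triangle inequality, it is enough to bound each step $|\Pr[\mathcal A(H_{k-1})=1]-\Pr[\mathcal A(H_k)=1]|$ by $\varepsilon_{\mathsf{enc}}(\lambda)$.

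For the step from $H_{k-1}$ to $H_k$, I would build a qIND-qCPA adversary $\mathcal B$ as follows.
\begin{enumerate}
\item $\mathcal B$ uses its encryption oracle, queried on classical basis inputs (a special case of quantum access), to obtain $\mathsf{Enc}(0^m)$ for the positions $i_1,\dots,i_{k-1}$ and $\mathsf{Enc}(D[i])$ for the positions $i_{k+1},\dots,i_N$. The oracle samples fresh randomness on each call, which matches the distribution of the hybrids.
\item $\mathcal B$ submits the challenge pair $(D[i_k],0^m)$. These are equal-length classical basis states, so they are admissible challenge plaintexts under Definition~\ref{def:qind-qcpa}. $\mathcal B$ places the returned ciphertext at position $i_k$.
\item $\mathcal B$ samples $\pi$ itself, permutes the assembled vector, and runs the server $\mathcal A$ on the result. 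This includes simulating the online query transcript, which depends only on $\pi$, $z$, and the layout, all of which $\mathcal B$ holds.
\item $\mathcal B$ outputs whatever $\mathcal A$ outputs.
\end{enumerate}
Depending on the challenge bit, $\mathcal A$'s view is exactly $H_{k-1}$ or $H_k$. Hence each step costs at most $\varepsilon_{\mathsf{enc}}(\lambda)$, and summing over the $N$ steps gives the claimed bound $N\cdot\varepsilon_{\mathsf{enc}}(\lambda)$.

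The main obstacle I expect is making sure the reduction $\mathcal B$ is a legitimate QPT adversary. When $N=2^n$ is exponential, having $\mathcal B$ fill in all $N$ blocks via oracle calls is not polynomial time. The per-step loss is also wasteful, since a single-challenge game with many CPA queries would already give a better bound. I would handle this by treating $N$ as polynomial in $\lambda$ for the formal statement, which is consistent with the $N\cdot\varepsilon_{\mathsf{enc}}$ form of the bound. Alternatively, one could observe that the multi-challenge left-or-right variant of the game reduces to the single-challenge game with exactly this linear loss. A second point to check is that the server's online interaction gives it no decryption capability: the server only ever receives ciphertexts and protected addresses, and never sees $(\pi'_{K_{\mathrm{enc}}})^{-1}$, which is applied only on the client side. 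So CPA security suffices, and no CCA-type notion is needed.
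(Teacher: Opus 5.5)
Your proposal is correct and follows the same route as the paper: an $N$-step hybrid $H_0,\dots,H_N$ that swaps real blocks for encryptions of $0^m$ one at a time. Each adjacent pair is reduced to qIND-qCPA by embedding the challenge ciphertext at the differing position and simulating all other blocks honestly, and summing the steps gives $N\cdot\varepsilon_{\mathsf{enc}}(\lambda)$. Your extra points are refinements the paper leaves implicit rather than a different argument: the reduction can sample $\pi$ itself because it is independent of $K_{\mathrm{enc}}$, CPA security suffices, and the reduction makes $\Theta(N)$ oracle calls, so $N$ must be treated as feasible.
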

\begin{proof}
    Use hybrids \(H_0,\ldots,H_N\), where $H_l$ encrypts the first $l$ blocks from the dummy database and the remaining blocks from the real database. Adjacent hybrids differ in one ciphertext and are indistinguishable by qIND-qCPA security. Summing over $N$ hybrids gives the bound.
\end{proof}

\begin{theorem}[Honest-but-curious qPRP-based single query privacy]
  \label{thm:hbc-prp-single-query}
  Assuming strong qPRP security for \(\pi_{K_{\mathrm{prp}}}\) and qIND-qCPA
security for \(Enc_{K_{\mathrm{enc}}}\). In a single-query epoch with fresh keys, fresh encryption randomness, and fresh $Z$-padding, the server's view is simulatable from the ideal leakage
\[
    \mathcal{L_{qPRP}}
    =
    \left(
        N,\,
        m+\tau,\,
        1,\,
       \{|\alpha_x|^2:x\in\{0,1\}^n\}
    \right),
\]
where \(N\) is the number of QRAM cells, \(m+\tau\) is the ciphertext block length, \(1\) denotes that one lookup occurred, and \(\{|\alpha_x|^2:x\in\{0,1\}^n\}\) is the relabeling-invariant dephased query distribution. The distinguishing advantage is bounded by
\(
    \varepsilon_{\mathsf{addr}}(q_{\mathsf{addr}},\lambda)
    +
    N\cdot\varepsilon_{\mathsf{enc}}(\lambda)
\).
\end{theorem}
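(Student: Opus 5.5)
We argue by a hybrid sequence that moves the real server view to a simulated one that depends only on \(\mathcal{L_{qPRP}}\).

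\emph{Setup.} The server's view in a single-query epoch has three parts:
\begin{itemize}
\item the uploaded layout \(C^\pi\);
\item the received protected address register \(U_\pi Z^z\ket{\psi}_A\), together with any purification or side register \(E\) it correlates with;
\item the returned registers before they leave the server, which are obtained from the previous two by the public map \(O_{C^\pi}\).
\end{itemize}
Since \(O_{C^\pi}\) is a fixed public function of the stored layout, it suffices to simulate the joint state of \(C^\pi\) and register \(A\). The server can then apply the lookup itself.

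\emph{Hybrids.}
\begin{itemize}
\item \(H_0\) is the real experiment.
\item \(H_1\) replaces \(\pi_{K_{\mathrm{prp}}}\) by a uniformly random permutation \(P\), both in the layout placement and in the coherent mask. By Lemma~\ref{lem:computational-address-hiding} this costs at most \(\varepsilon_{\mathsf{addr}}(q_{\mathsf{addr}},\lambda)\). The reduction simulates the layout placement and \(U_P\) using its forward and inverse oracle, so the strong qPRP of Definition~\ref{def:sqprp} is what is needed.
\item \(H_2\) replaces every ciphertext by an encryption of \(0^m\) under fresh randomness. By Lemma~\ref{lem:database-confidentiality} this costs at most \(N\cdot\varepsilon_{\mathsf{enc}}(\lambda)\).
\end{itemize}
In \(H_2\) the layout is a list of \(N\) encryptions of zero placed at \(P(i)\). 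Because all plaintexts are identical and the randomness is fresh, this list is distributed independently of \(P\), as an i.i.d.\ collection indexed by physical address.

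\emph{Averaging step.} The layout in \(H_2\) can therefore be sampled independently of \(P\). Conditioned on it, the address register is averaged over the uniform \(P\) and the uniform \(z\). Lemma~\ref{lem:single-query-rerandomization} gives \(I_A/N\otimes\rho_E\). The simulator thus outputs fresh encryptions of \(0^m\) of length \(m+\tau\) in \(N\) cells, together with a maximally mixed \(n\)-qubit register. It uses only \(N\), \(m+\tau\) and the fact that one query occurred. The leaked diagonal distribution \(\{|\alpha_x|^2\}\) is not even needed, which only strengthens the statement. The triangle inequality gives the stated bound \(\varepsilon_{\mathsf{addr}}+N\varepsilon_{\mathsf{enc}}\).

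\emph{Main obstacle.} The delicate point is the order of the hybrids and the independence argument. In \(H_1\) the layout and the mask share \(P\), so the server could in principle correlate physical cells with address weights. This correlation disappears only after the ciphertext contents are made independent of the logical index, which happens in \(H_2\).

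One option is to swap the order: first apply Lemma~\ref{lem:database-confidentiality}, with the reduction embedding the qIND-qCPA challenge while the reduction itself holds \(\pi\); then apply Lemma~\ref{lem:computational-address-hiding}. Either order works, provided the qIND-qCPA reduction can generate the remaining layout cells with its encryption oracle.

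A second subtlety is that the strong qPRP is invoked in the real setting where the adversary never queries \(\pi\) directly. We only need that the client-side uses of \(\pi\) are simulable from oracle access, which the reduction in Lemma~\ref{lem:computational-address-hiding} provides.
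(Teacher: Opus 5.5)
Your proof is correct, and it proves a slightly stronger statement than the paper's by a different decomposition.

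\textbf{How the two routes differ.} The paper's appendix proof proceeds as follows:
\begin{itemize}
\item $H_1$ replaces the qPRP by a random permutation $P$;
\item $H_2$ dephases the address with the $Z$-pad;
\item $H_3$ argues that the random $P$ hides which weight belongs to which logical label;
\item only then does $H_4$ swap in dummy encryptions.
\end{itemize}
Its simulator samples its own $P$ and places the multiset $\{|\alpha_i|^2\}$ on $P$-permuted labels. You instead move the dummy-encryption hybrid ahead of any averaging. You note that a layout of fresh encryptions of $0^m$ is distributed independently of $P$, and then average over $(P,z)$ in one step via Lemma~\ref{lem:single-query-rerandomization}.

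\textbf{What your route buys.} First, it makes rigorous a point the paper's $H_3$ handles only informally. In $H_3$ the real layout still depends on $P$, so hiding the label--weight association by averaging over $P$ is justified only after $H_4$ has decoupled the layout from $P$. Second, it shows that the diagonal-distribution leakage is unnecessary in a single-query epoch. This agrees with the paper's own Lemma~\ref{lem:single-query-rerandomization}: the paper's simulator, averaged over its own $P$, already outputs $I_A/N$. The paper's leakage $\{|\alpha_x|^2\}$ is thus a conservative over-approximation. It becomes meaningful only when one permutation serves repeated queries within an epoch, which is the regime its formulation keeps in view.

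\textbf{On the order of hybrids.} Your swapped-order remark deserves more weight than ``either order works.'' Suppose the qPRP is replaced first, as in the paper's order and your default. Then the reduction must query $\pi$ on all $N$ cells to build the layout, so $q_{\mathsf{addr}}\ge N$, and a Feistel bound of the form $O(q^3/N^{1/4})$ becomes vacuous. If the $N$ qIND-qCPA hybrids come first, $\pi$ disappears from the layout. The qPRP reduction then needs only the forward and inverse query used to implement the coherent mask $U_\pi$.
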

\begin{proof}[Proof sketch]
    Replace the address qPRP by a uniformly random permutation using Lemma~\ref{lem:computational-address-hiding}. With fresh $Z$-padding, the address register is dephased. The fresh random permutation hides the association between logical labels and physical labels, leaving only the unlabeled diagonal distribution \(\{|\alpha_i|^2\}\). Then replace the encrypted database by an encrypted dummy database using Lemma~\ref{lem:database-confidentiality}. The resulting view depends only on \(\mathcal{L}\), and the total loss is the sum of the address-hiding and database-confidentiality hybrid losses. A detailed proof is in Appendix~\ref{subsec:hbc-qprp-proof}.
\end{proof}

Although \(\mathcal{L}_{\mathsf{qPRP}}\) includes the dephased amplitudes \(\{|\alpha_i|^2\}\), learning this distribution requires many repeated copies of the same protected input. After $Z$-padding, the server only sees a shuffled diagonal distribution, whose full reconstruction over $N$ addresses requires roughly \(\tilde{\Omega}(N)\) copies in standard tomography~\cite{haah2016sample,o2016efficient}. Therefore, refreshing after \(t<N^{1/12}\) queries keeps this leakage far below the reconstruction regime.

If instead one hidden permutation is reused across multiple queries in one epoch, then the privacy guarantee is weaker. The server may correlate the protected queries it receives across rounds, but any successful attempt to exploit the specific structure of the keyed permutation family to recover the hidden permutation would imply a distinguisher against the underlying qPRP.

\begin{lemma}[Fixed-epoch multi-query hiding under qPRP]
\label{lem:qprp-multiquery}
Fix one epoch and let \(\pi=\pi_{K_{\mathrm{prp}}}\) be the hidden permutation determined by a secret qPRP key \(K_{\mathrm{prp}}\). Suppose the server stores the corresponding encrypted layout \(C^\pi\), and across the epoch receives \(t\) protected queries encoded using the same \(\pi\), with fresh independent \(Z\)-padding in each round. If the qPRP family is secure against \(q\) quantum uses and \(t<q\), then for any efficient quantum server, the resulting multi-query view is computationally indistinguishable from the same experiment in which \(\pi\) is replaced by a uniformly random hidden permutation. In particular, the server cannot recover the original-to-physical address mapping, or learn the qPRP key itself, beyond what is possible in the random hidden-permutation experiment.
\end{lemma}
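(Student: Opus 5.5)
The plan is a direct reduction to the strong qPRP security of Definition~\ref{def:sqprp}, in the same spirit as Lemma~\ref{lem:computational-address-hiding} but now tracking all $t$ rounds of one epoch. Consider two experiments. In $\mathsf{Real}$, the hidden map is $\pi=\pi_{K_{\mathrm{prp}}}$. In $\mathsf{Ideal}$, it is a uniformly random permutation $P$. In both, the layout $C^\pi$ (resp.\ $C^P$) is stored by the server and $t$ protected queries $U_\pi Z^{z_s}\ket{\psi_s}$ are sent with fresh $z_s$. Given a distinguishing server $\mathcal{S}$, I will build a qPRP adversary $\mathcal{B}$ with oracle access to $\mathcal{O},\mathcal{O}^{-1}$, where $\mathcal{O}$ is either $\pi_K$ or $P$. $\mathcal{B}$ samples $K_{\mathrm{enc}}$ and the $r_i$ itself and plays the client for $\mathcal{S}$, so that the final output of $\mathcal{S}$ is simulated perfectly in each world.

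The simulation proceeds in two steps, and the first is the layout. $\mathcal{B}$ computes $C[i]=\pi'_{K_{\mathrm{enc}}}(D[i]\|r_i)$ and sets $C^{\mathcal{O}}[j]=C[\mathcal{O}^{-1}(j)]$, querying $\mathcal{O}^{-1}$ classically on each $j$. This costs $N$ classical queries.

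The second step is the online rounds. For round $s$, $\mathcal{B}$ prepares $\ket{\psi_s}$, applies $Z^{z_s}$, and then implements $U_{\mathcal{O}}$ coherently. It does this by using one forward quantum query to map $\ket{i}\ket{0}\mapsto\ket{i}\ket{\mathcal{O}(i)}$, then one inverse query on the second register to erase $\ket{i}$, and finally swapping the registers. It then hands the register to $\mathcal{S}$, which applies $U_{C^{\mathcal{O}}}$ honestly. On return, $\mathcal{B}$ unmasks with the same two-query trick, run in reverse, if the experiment includes client post-processing in the view; otherwise it can simply discard. Each round therefore costs $O(1)$ quantum queries, so the total is $q'=O(t)+N$ queries, and the hypothesis $t<q$ is what keeps this within the security budget. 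The output distributions of $\mathcal{B}$ then coincide with those of $\mathsf{Real}$ and $\mathsf{Ideal}$, so $|\Pr[\mathsf{Real}]-\Pr[\mathsf{Ideal}]|\le\operatorname{Adv}^{\mathsf{sqPRP}}(\mathcal{B})\le\varepsilon_{\mathsf{prp}}(q',\lambda)$ by Theorem~\ref{thm:feistel-sqprp}.

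The consequence about the mapping and the key follows by a standard argument. Suppose a server outputs a guess $(i,\pi(i))$, or a key $K$, with non-negligibly higher success in $\mathsf{Real}$ than in $\mathsf{Ideal}$. Then $\mathcal{B}$ can test the guess with one extra classical oracle query, and this yields a distinguisher. In the $\mathsf{Ideal}$ world, the key is undefined and the best success is whatever the random-permutation experiment allows.

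The main obstacle is the query accounting for the layout. Building $C^{\mathcal{O}}$ needs all $N$ values of $\mathcal{O}^{-1}$, which is exponential and not polynomial in $\lambda$. My first approach is to note that the classical layout-construction queries are evaluations on known classical points, which the Feistel bound's $q$ must still count. I would then either state the bound as $\varepsilon_{\mathsf{prp}}(t+N,\lambda)$, or treat the qPRP as having key length independent of $n$ with security against $2^{n}$ classical plus $t$ quantum queries. The alternative I would try is to recast $C^{\pi}$ itself as part of the oracle: $\mathcal{B}$ answers the server's QRAM lookups through $\mathcal{O}^{-1}$ lazily, using $U_{C^\pi}\ket{j}\ket{b}=\ket{j}\ket{b\oplus C[\mathcal{O}^{-1}(j)]}$, which costs two quantum inverse queries per lookup. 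This keeps the count at $O(t)$ as long as the honest-but-curious server touches the stored layout only through its $t$ prescribed lookups. That assumption is the model-level caveat I would have to state explicitly.
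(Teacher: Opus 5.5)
\textbf{Gap: the layout query cost.} Your reduction is the paper's argument written out. The paper's proof is one sentence saying that a distinguishing server yields a qPRP distinguisher, and it implicitly charges only the $t$ online rounds to the budget $q$. You correctly saw what that sentence hides: $\mathcal{B}$ must also give the server the stored layout $C^{\mathcal{O}}$, which costs $N$ oracle calls. Even the online part costs up to $4t$ calls, so your remark that $t<q$ keeps $q'=O(t)+N$ within budget is not right as written. Neither of your remedies proves the lemma in the paper's setting:
\begin{itemize}
\item Your first remedy strengthens the hypothesis to security against $N+O(t)$ queries, i.e.\ essentially full-domain security. That is not the lemma's hypothesis. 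Under the concrete seven-round Feistel bound $O(q^3/N^{1/4})$ the paper relies on, it is vacuous: the bound is already trivial once $q\ge N^{1/12}$, which is where the paper's refresh period comes from.
\item Your lazy-lookup remedy changes the adversary. In the honest-but-curious model the stored protected database is explicitly part of the server's view, so a server that only sees its $t$ lookups answered lazily is not the server in the lemma.
\end{itemize}

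\textbf{Missing idea: reorder the hybrids.} Make the layout independent of $\pi$ before invoking qPRP security.
\begin{enumerate}
\item Under the real $\pi_K$, replace $D$ by the dummy database $0^m$ using Lemma~\ref{lem:database-confidentiality}. That reduction samples $K_{\mathrm{prp}}$ itself, so it builds the permuted layout and the $t$ protected queries with no qPRP oracle.
\item With dummy plaintexts, $C^{\pi}[j]=\pi'_{K_{\mathrm{enc}}}(0^m\|r_{\pi^{-1}(j)})$ is a vector of $N$ i.i.d.\ encryptions of $0^m$. Its distribution is the same for every $\pi$ and independent of the $z_s$ and the query states. So $\mathcal{B}$ samples it directly and spends only the $O(1)$ calls per round of your in-place trick, at most $4t$ in total.
\item Switch back from $0^m$ to $D$ under the random $P$.
\end{enumerate}
The total loss is $\varepsilon_{\mathsf{prp}}(4t,\lambda)+2N\varepsilon_{\mathsf{enc}}(\lambda)$. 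This matches the lemma's $t<q$ budget up to a constant, at the price of adding qIND-qCPA security of $\mathrm{Enc}_{K_{\mathrm{enc}}}$ as a hypothesis. Theorem~\ref{thm:hbc-prp-single-query} already assumes this.

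The extra hypothesis is not an artifact. With a trivial encryption layer and distinct records, the stored layout would reveal the whole table of $\pi$, and no budget of order $t$ could suffice. The same reordering is needed for the step $H_0\to H_1$ in the appendix proof of Theorem~\ref{thm:hbc-prp-single-query}. Your guess-testing argument for the consequence about the mapping and the key is fine once the main indistinguishability holds.
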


\begin{proof}
The proof is by a standard reduction to qPRP security. If an efficient server could distinguish the real experiment from the random hidden-permutation experiment with non-negligible advantage after observing \(t<q\) protected queries, then one could build a quantum distinguisher that uses the server as a subroutine and breaks the qPRP family with the same non-negligible advantage.
\end{proof}

\subsubsection{qOTP-based}

\begin{lemma}[qOTP address hiding]
\label{lem:qotp-address-hiding}
For any \(n\)-qubit address state \(\rho_A\),
\[
    \mathbb E_{a,b}\left[X^a Z^b \rho_A Z^b X^a\right]
    =
    \frac{I_A}{N}, \qquad
    N=2^n.
\]
More generally, for any joint state \(\rho_{AE}\),
\[
    \mathbb E_{a,b}\left[(X^aZ^b\otimes I_E)\rho_{AE}(Z^bX^a\otimes I_E)\right]
    =
    \frac{I_A}{N}\otimes \rho_E.
\]
\end{lemma}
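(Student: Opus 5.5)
The plan is to prove the joint-state identity directly, since the single-register identity is the special case where \(E\) is trivial. I will follow the same two-stage averaging used in the proof of Lemma~\ref{lem:single-query-rerandomization}, with the uniformly random permutation replaced by uniformly random XOR shifts. The key sets are independent and uniform, so the expectation factors as \(\mathbb E_a\mathbb E_b\). I will perform the \(b\)-average first and the \(a\)-average second; this order is legitimate because \(X^a\) is applied after \(Z^b\) in the conjugation.

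\textbf{Step 1 (phase average).} Expand \(\rho_{AE}=\sum_{x,y}|x\rangle\langle y|\otimes R_{x,y}\) with operators \(R_{x,y}\) on \(E\). For each fixed \(b\),
\[
(Z^b\otimes I_E)\rho_{AE}(Z^b\otimes I_E)=\sum_{x,y}(-1)^{b\cdot(x\oplus y)}|x\rangle\langle y|\otimes R_{x,y}.
\]
Exactly as in Lemma~\ref{lem:z-dephase}, the average of \((-1)^{b\cdot w}\) over uniform \(b\) equals \(1\) if \(w=0\) and \(0\) otherwise. Hence the \(b\)-average leaves the block-dephased state \(\sum_x |x\rangle\langle x|\otimes R_{x,x}\).

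\textbf{Step 2 (shift average).} Conjugating by \(X^a\otimes I_E\) maps \(|x\rangle\langle x|\) to \(|x\oplus a\rangle\langle x\oplus a|\). Averaging over uniform \(a\) gives
\[
\frac1N\sum_{x}\sum_a |x\oplus a\rangle\langle x\oplus a|\otimes R_{x,x}.
\]
For each fixed \(x\), the map \(a\mapsto x\oplus a\) is a bijection on \(\{0,1\}^n\), so the inner sum equals \(I_A\). The whole expression is therefore \(\frac{I_A}{N}\otimes\sum_x R_{x,x}\). Finally, \(\sum_x R_{x,x}=\operatorname{Tr}_A\rho_{AE}=\rho_E\), which gives the claimed joint identity.

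\textbf{Single-register case and main obstacle.} Taking \(E\) trivial, with \(R_{x,x}=\rho_{x,x}\) and \(\sum_x\rho_{x,x}=\operatorname{Tr}\rho_A=1\), gives the first identity. The only delicate point is the bookkeeping in Step 1: one must check that the off-diagonal blocks vanish even though the \(R_{x,y}\) are operators rather than scalars. This holds because the phase is a scalar multiplying each block, so the character-sum argument applies blockwise. Everything else is a reindexing.
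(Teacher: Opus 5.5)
Your proposal is correct and follows essentially the same route as the paper's proof. Both first average over the $Z$-key to remove the off-diagonal blocks of $\rho_{AE}=\sum_{x,y}|x\rangle\langle y|\otimes R_{x,y}$, then average over the $X$-key to spread the diagonal uniformly, leaving $\frac{I_A}{N}\otimes\sum_x R_{x,x}=\frac{I_A}{N}\otimes\rho_E$. Your blockwise character-sum remark and the bijection $a\mapsto x\oplus a$ spell out details the paper leaves implicit.
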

\begin{proof}
Write \(\rho_A=\sum_{i,j}\rho_{i,j}|i\rangle\langle j|\).
First of all, averaging over \(b\) can remove all off-diagonal terms: 
\(\mathbb E_b[Z^b\rho_A Z^b] = \sum_i \rho_{i,i}|i\rangle\langle i|.\)
Then, averaging over \(a\) uniformly shifts the diagonal distribution: 
\(\mathbb E_a\left[X^a\left(\sum_i \rho_{i,i}|i\rangle\langle i|\right)X^a\right]=\frac{I_A}{N}.\)
The proof with side information is identical after writing \(\rho_{AE}=\sum_{i,j}|i\rangle\langle j|\otimes R_{i,j}\).
The \(Z\)-average removes the \(i\neq j\) blocks, and the \(X\)-average maps the
remaining diagonal address register to \(I_A/N\), while preserving \(\rho_E=\sum_i R_{i,i}\).
\end{proof}

\begin{theorem}[Honest-but-curious qOTP-based single query privacy]
\label{thm:qotp-hbc}
    Assume that \(Enc_{K_{\mathrm{enc}}}\) is qIND-qCPA secure. In a single-query epoch using fresh address pads \(a,b\), fresh encryption randomness, and a layout refreshed consistently with \(a\), the honest-but-curious server's view is simulatable from the leakage
    \[
        \mathcal{L}_{\mathsf{qOTP}}=\left(N,\,m+\tau,\,1\right),
    \]
    where \(N\) is the number of cells, \(m+\tau\) is the ciphertext block length, and \(1\) denotes that one lookup occurred. The distinguishing advantage is bounded by
    \(N\cdot \varepsilon_{\mathsf{enc}}(\lambda)\).
\end{theorem}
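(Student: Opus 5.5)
The proof is a short hybrid argument with two steps. First, the address register is replaced by the maximally mixed state using Lemma~\ref{lem:qotp-address-hiding}, which is information-theoretic and costs nothing. Second, the encrypted layout is replaced by an encryption of a dummy database using Lemma~\ref{lem:database-confidentiality}, which costs $N\cdot\varepsilon_{\mathsf{enc}}(\lambda)$. Unlike Theorem~\ref{thm:hbc-prp-single-query}, there is no qPRP hybrid. The $X$-pad is a truly uniform shift, so no $\varepsilon_{\mathsf{addr}}$ term appears. This also explains why the diagonal distribution $\{|\alpha_x|^2\}$ drops out of $\mathcal{L}_{\mathsf{qOTP}}$.

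\textbf{Step 0: write down the real view.} The server's view consists of three parts:
\begin{itemize}
\item the stored layout $C^{a}[j]=C[j\oplus a]$, with $C[i]=\mathrm{Enc}_{K_{\mathrm{enc}}}(D[i];r_i)$;
\item the received protected register $(X^aZ^b\otimes I_E)\rho_{AE}(Z^bX^a\otimes I_E)$, where $E$ is the client's purification or side register, not given to the server;
\item the public metadata $(N,m+\tau,1)$.
\end{itemize}
The server's lookup is a fixed unitary applied to the received register, and it returns everything to the client. It therefore suffices to simulate the joint state of (layout, received address register) together with any server workspace. Applying $O_{C^a}$ afterwards is post-processing that the simulator can carry out itself on its own layout.

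\textbf{Step 1: decouple the layout from the pad.} The main subtlety is that the layout and the address register are correlated through the shared pad $a$: the layout is shifted by $a$ and the register is $X^a$-masked. We therefore cannot average the address register over $a$ in isolation. Instead, we first apply the encryption hybrid to the layout. Conditioned on $a$, the real layout is a list of $N$ ciphertexts of $D[j\oplus a]$. By the hybrids $H_0,\dots,H_N$ of Lemma~\ref{lem:database-confidentiality}, applied to the shifted database $D\circ(\cdot\oplus a)$, this list is indistinguishable from $N$ encryptions of $0^m$ up to advantage $N\varepsilon_{\mathsf{enc}}(\lambda)$. The reduction samples $a,b$ itself and prepares the masked address register itself, so qIND-qCPA is the right notion: the reduction only needs classical challenge ciphertexts.

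In the resulting hybrid, the layout consists of encryptions of the all-zero database with fresh randomness. Its distribution is invariant under any relabeling of cells, and in particular it is independent of $a$.

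\textbf{Step 2: average the address register.} Now the only remaining dependence on $a,b$ is in the address register. By Lemma~\ref{lem:qotp-address-hiding} (with side information), its average is $I_A/N\otimes\rho_E$. The simulator is then: given $(N,m+\tau,1)$, encrypt $N$ dummy blocks under a fresh key, and hand over a maximally mixed $n$-qubit register. Its output is identical to the second hybrid and uses only the leakage $\mathcal{L}_{\mathsf{qOTP}}$. Summing the losses, $0$ from Step 2 and $N\varepsilon_{\mathsf{enc}}$ from Step 1, gives the stated bound.

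The step I expect to be the main obstacle is the ordering in Step 1: the encryption hybrid must be performed before invoking the one-time pad, because the layout and the register share $a$. I would also check that fresh randomness $r_i$ makes the encrypted layout $a$-independent only after the dummy replacement, and not before.
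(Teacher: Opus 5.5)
Your proof is correct. It uses the same two ingredients and the same simulator as the paper: Lemma~\ref{lem:qotp-address-hiding} for the register, and the $N$-block qIND-qCPA hybrid of Lemma~\ref{lem:database-confidentiality} for the layout. The difference is that you apply them in the opposite order, and that ordering is substantive.

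The paper's appendix proof goes $H_0\to H_1$ first. It replaces the protected register by $I_A/N$, justified only by the marginal identity $\mathbb{E}_{a,b}[X^aZ^b\rho Z^bX^a]=I_A/N$, and calls this step information-theoretic. Only afterwards, in $H_2$, does it swap $C^a$ for dummy encryptions.

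As you observe, the server holds $C^a$ alongside the register. The real joint state is therefore $\mathbb{E}_a[\sigma_a\otimes X^a\Delta(\rho)X^a]$, where $\sigma_a$ is the distribution of the real shifted layout. This equals $\bar\sigma\otimes I_A/N$ only if $\sigma_a$ is independent of $a$. For a keyed encryption of a non-shift-invariant $D$, that independence holds in general only computationally. An unbounded server could brute-force $K_{\mathrm{enc}}$, read off $a$, measure the register, and undo the shift to obtain a sample from $\{|\alpha_i|^2\}$. So the paper's first step is not actually free.

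Making that step rigorous would require an encryption hybrid of its own. That either inflates the bound beyond $N\varepsilon_{\mathsf{enc}}(\lambda)$, or the argument collapses into your ordering. Because you perform the dummy replacement first, the layout becomes exactly $a$-independent by exchangeability. The pad average is then a genuine identity, and the stated bound $N\varepsilon_{\mathsf{enc}}(\lambda)$ follows with no hidden loss. The paper's ordering buys only a narrative parallel with the qPRP proof (address hiding first, data confidentiality second).
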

\begin{proof}[Proof sketch]
For any query state \(\rho_A\), the fresh qOTP satisfies \(\mathbb E_{a,b}[X^aZ^b\rho_A Z^bX^a]=\frac{I_A}{N}\). Thus the protected address register is information-theoretically independent of the logical query. The only remaining information is the refreshed encrypted layout. By qIND-qCPA security and an \(N\)-block hybrid, this layout is indistinguishable from an encrypted dummy layout with loss \(N\varepsilon_{\mathsf{enc}}(\lambda)\). The resulting view consists only of a maximally mixed address register, an encrypted dummy layout, and the public parameters \((N,m+\tau,1)\), so it is simulatable from \(\mathcal{L}_{\mathsf{qOTP}}\). A detailed proof is in Appendix~\ref{subsec:hbc-qotp-proof}.
\end{proof}

\subsection{Security of the Two-Round Extension}
\label{subsec:two-round-security}

The two-round extension is a query-use-unquery wrapper around the one-round protected query, and it does not introduce a new hiding mechanism.

\begin{lemma}[Two-round privacy preservation]
\label{lem:two-round-privacy-preservation}
If the one-round protected query hides the logical address state and
plaintext database from an honest-but-curious server up to the allowed
leakage, then the two-round extension preserves the same privacy
guarantee. One logical query is counted as two protected oracle
invocations.
\end{lemma}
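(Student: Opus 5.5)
The argument is a simulation reduction. Everything the server sees in the two-round extension either comes from a protected oracle invocation or is a fixed function, computed locally by the client, of secrets the server never holds. Fix a logical input \(\sum_i \alpha_i\ket{i}_A\ket{0^{m+\tau}}_B\ket{b_i}_M\), where \(M\) and the client's purification stay local throughout. Write the server's view as the joint state of everything it receives: the stored layout \(C^\star\), the register pair \((A,B)\) at the first invocation, and the register pair \((A,B)\) at the second invocation. The plan is to show this two-invocation view is simulatable from the leakage of two one-round invocations under the same layout. That is exactly the claim that one logical query counts as two protected oracle invocations.

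\textbf{Step 1: the first invocation.} Its input is literally the one-round protected query \(U_{\pi^\star}Z^z\ket{\psi}_A\ket{0}_B\). By the hypothesis, instantiated as Theorem~\ref{thm:hbc-prp-single-query} or Theorem~\ref{thm:qotp-hbc}, it is simulatable.

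\textbf{Step 2: the second invocation.} Between the calls the client applies \(Z^zU_{\pi^\star}^\dagger\), then decryption, a CNOT into \(M\), re-encryption, and \(U_{\pi^\star}Z^z\). All of this acts only on client-held registers. In the honest execution, the register pair sent back is
\[
\sum_i \alpha_i(-1)^{z\cdot i}\ket{\pi^\star(i)}_A\ket{C^\star[\pi^\star(i)]}_B\ket{b_i\oplus D[i]}_M .
\]
This is exactly the state the server itself output after the first call, except that \(M\) is modified locally. So the server's received registers at round two are its own round-one output registers, with branch labels correlated to the client-side register \(M\). The key observation is that the server's reduced state over \((A,B)\) across both rounds is obtained from the one-round view by an isometry that the server could apply itself, namely \(O_{C^\star}\) on its own copy. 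The only difference is the extra entanglement with \(M\), which acts as additional side information \(E\).

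\textbf{Step 3: applying the side-information lemmas.} Here I would use the side-information forms of Lemma~\ref{lem:single-query-rerandomization} and Lemma~\ref{lem:qotp-address-hiding}, treating \(M\) together with the client purification as \(E\). Averaging over \(z\), and over the random permutation or the pad \(a\), gives a server address marginal of \(I_A/N\) in the qOTP case. In the qPRP case it gives the shuffled diagonal distribution \(\{|\alpha_i|^2\}\), which is already included in \(\mathcal L_{\mathsf{qPRP}}\). In both cases the marginal is tensored with an \(E\)-state the server never sees.

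The bus \(B\) carries \(C^\star[\pi^\star(i)]\). After the layout is replaced by an encrypted dummy layout via Lemma~\ref{lem:database-confidentiality}, the bus content is determined by the physical address and the dummy layout alone. It is therefore simulatable by having the simulator run \(O_{C^\star}\) on its simulated address register. The simulator for both rounds thus samples the simulated one-round address state, applies the dummy-layout oracle to produce round-one output, and hands the same registers back as round-two input. The loss is no larger than in the single-query bound, because only one layout is uploaded and the qPRP is used for two invocations, so \(q_{\mathsf{addr}}\) grows by a constant.

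\textbf{Main obstacle.} The hardest step is to justify that reusing the same \(z\) and the same \(\pi^\star\) (or \(a\)) across the two invocations does not leak cross-invocation correlations of the kind shown in Lemma~\ref{lem:qotp-refresh-vulnerability}. I would handle it by arguing that the second input is not an independent query. Up to client-local unitaries it equals the server's own first output, so the correlation between the two rounds is one the simulator reproduces deterministically. A distinguisher that exploits it could therefore already be run on the one-round view. I would make this precise by writing the two-round view as \(\Phi(\text{one-round view})\) for a fixed CPTP map \(\Phi\) that does not depend on the client's secrets, and then invoking monotonicity of distinguishing advantage under \(\Phi\).
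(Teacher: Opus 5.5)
Your proposal follows essentially the same argument as the paper. The decrypt--CNOT--re-encrypt step is performed client-side, so the second protected invocation is simulatable from the first (the paper writes this as $I(L;V_2\mid V_1,\mathsf{pub})=0$), and reusing $\pi^\star$ and $z$ (or the qOTP pad) within the query--unquery pair only doubles the oracle count rather than creating an independent query. One precision, which the paper also leaves implicit: the CNOT of $D[i]$ into $M$ acts branch by branch, so it can entangle $M$ with the address branches and dephase the round-two $(A,B)$ marginal in a $D$-dependent way for fixed keys; your map $\Phi$ is therefore secret-independent only for the $Z$-pad-averaged per-round marginals, where those off-diagonal terms vanish.
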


\begin{proof}
Let \(L\) denote the private logical information, and let \(V_1,V_2\) be
the server's views in the first and second protected lookups. The only
operation between the two lookups that touches plaintext is performed
locally by the client: the register containing \(b_i\oplus D[i]\) is never
transmitted. Before the second lookup, the transmitted registers are 
in the protected address-ciphertext seen by the server.

Therefore, conditioned on the first view and the public fact that the two
calls form a query-use-unquery pair, the second view is simulatable from the first:
\[
    I(L;V_2\mid V_1,\mathsf{pub})=0 .
\]
The server only learns that the second call uncomputes the first, which
reveals neither the logical address nor the plaintext data.

For the qPRP variant, the wrapper only doubles the protected-oracle count
within the epoch security budget. For the qOTP variant, the same mask and
shifted layout are reused only inside this query-use-unquery pair; every
independent logical query still requires a fresh mask and refreshed layout.
The claim is limited to honest-but-curious privacy and does not provide
malicious-server verifiability.
\end{proof}

\subsection{Decoy Detection Bound}
\label{subsec:decoy_bound}

\begin{proposition}[Decoy detection bound]
\label{prop:decoy-bound}
Fix an attack family \(\mathcal A\). Suppose each attacked round is a decoy with probability \(p\), independently of the server's view, and that conditioned on an attacked round being a decoy, the client rejects with probability at least \(\eta\). Then the probability that \(T\) attacked rounds all escape detection is at most
\[
(1-p\eta)^T\le e^{-p\eta T}.
\]
\end{proposition}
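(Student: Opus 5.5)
The plan is to bound the escape probability round by round and chain the bounds with conditional probabilities along the \(T\) attacked rounds. Let \(E_t\) be the event that attacked round \(t\) does not cause rejection. Let \(\mathcal F_{t-1}\) denote the server's entire view, and the client's relevant randomness, up to round \(t-1\). We want \(\Pr[E_1\cap\cdots\cap E_T]\). By the chain rule,
\[
\Pr\Bigl[\textstyle\bigcap_{t=1}^T E_t\Bigr]=\prod_{t=1}^T \Pr\Bigl[E_t \,\Big|\, \textstyle\bigcap_{s<t}E_s\Bigr].
\]
So it suffices to show that each factor is at most \(1-p\eta\).

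For a single attacked round \(t\), condition on the history. Let \(b_t\) be that round's decoy bit. The hypothesis says \(b_t\) is independent of the server's view, so it is also independent of the server's chosen deviation in round \(t\) from \(\mathcal A\). In particular, \(\Pr[b_t=1\mid \mathcal F_{t-1},\text{attack choice}]=p\). Conditioned on \(b_t=1\) and on the attack, rejection occurs with probability at least \(\eta\). This is the hypothesis, which I would read as holding uniformly over histories; making that explicit is part of the statement's assumptions. If \(b_t=0\), I use the trivial bound, so escape probability is at most \(1\). Splitting on \(b_t\) by total probability gives
\[
\Pr[E_t\mid \mathcal F_{t-1}] \le (1-p)\cdot 1 + p(1-\eta) = 1-p\eta .
\]
Averaging over histories consistent with \(\bigcap_{s<t}E_s\) keeps the same bound. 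Multiplying over \(t\) then yields \((1-p\eta)^T\). The final inequality \((1-p\eta)^T\le e^{-p\eta T}\) follows from \(1-x\le e^{-x}\) for \(x=p\eta\in[0,1]\).

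The main obstacle is justifying that conditioning on past non-rejection does not bias later decoy bits or degrade the \(\eta\) guarantee. A server might adapt across rounds, and the event of surviving earlier rounds could in principle correlate with later decoy placement. I would address this in two ways.

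\begin{itemize}
\item \emph{Fresh decoy bits.} Decoy bits are sampled fresh and independently per round, and they are hidden from the server's view. This holds by the indistinguishability of decoy and ordinary queries under the protection of Sec.~\ref{subsec:hbc-privacy}, together with the idealized hypothesis "independently of the server's view".
\item \emph{Deferred rejection.} Rejection outcomes are deferred or aggregated, as in Sec.~\ref{subsec:decoys}. The server therefore does not observe \(E_s\) for \(s<t\) before acting in round \(t\).
\end{itemize}

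With these, the per-round bound holds conditionally on any history, and the telescoping product is valid.
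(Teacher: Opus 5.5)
Your proposal is correct and follows the same route as the paper: bound each attacked round's escape probability by \((1-p)+p(1-\eta)=1-p\eta\), multiply over the \(T\) rounds, and finish with \(1-x\le e^{-x}\). The paper just asserts the multiplication step, while you justify it with the chain rule and state explicitly that the decoy-bit independence and the \(\eta\) guarantee must hold conditionally on every history, including past non-rejection — the assumption the paper's one-line argument uses implicitly.
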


\begin{proof}
For each attacked round, the probability of escaping detection is at most \(1-p\eta\). Multiplying over \(T\) rounds gives \((1-p\eta)^T\), and the exponential bound follows from \(1-x\le e^{-x}\).
\end{proof}

The decoy mechanism is cheat-sensitive rather than fully verifiable. It detects deviations that disturb selected test queries, with probability determined by the decoy rate and test sensitivity. A malicious server may still evade detection on untested ordinary queries with a probability captured by the decoy detection bound. Full verification of delegated QRAM execution would require additional authentication or verification machinery.

%% file: results.tex
\section{Protocol Resource Estimation and Deployment Discussion}
\label{sec:results}

This section summarizes the resource requirements and deployment implications of the protected QRAM protocol. The same accounting model is used throughout: \(N=2^n\), each logical data block has \(m\) bits, the ciphertext bus has \(m+\tau\) qubits, and the server implements a bucket-brigade QRAM over an \(n\)-qubit protected address register and an \((m+\tau)\)-qubit ciphertext bus. Detailed derivations of the primitive costs and table entries are deferred to Appendix~\ref{sec:detailed-resource-estimation}.

\begin{table*}[t]
\centering
\small
\renewcommand{\arraystretch}{1.15}
\setlength{\tabcolsep}{4.0pt}
\begin{tabular}{@{}C{2.25cm}|C{2.8cm}|C{0.8cm}|C{2.8cm}|C{1.2cm}|C{2.15cm}|C{1.4cm}|C{1.3cm}@{}}
\hline\hline
\multirow{2}{*}{\textbf{Scheme}}
& \multicolumn{2}{c|}{\textbf{Qubit count}}
& \multicolumn{2}{c|}{\textbf{Online query depth}}
& \multirow{2}{*}{\makecell[c]{\textbf{Classical}\\\textbf{comm.}}}
& \multirow{2}{*}{\makecell[c]{\textbf{Quantum}\\\textbf{comm.}}}
& \multirow{2}{*}{\makecell[c]{\textbf{Refresh}\\\textbf{freq.}}} \\
\cline{2-5}
& \textbf{Client} & \textbf{Server} & \textbf{Client} & \textbf{Server} & & & \\
\hline\hline

\textbf{qPRP + \(Z\)-padding}
& \makecell[c]{\(n+m+\tau+\)\\\(d_A \log q_A+d_E \log q_E\)}
& \(N\)
& \makecell[c]{\(n d_A(\log q_A)^2+\)\\\((m+\tau)d_E(\log q_E)^2\)}
& \(n+m+\tau\)
& \(\dfrac{N(m+\tau)}{t}\)
& \(n+m+\tau\)
& \makecell[c]{every\\\(t\) queries} \\
\hline

\textbf{qOTP}
& \makecell[c]{\(n+m+\tau+\)\\\(d_E \log q_E\)}
& \(N\)
& \((m+\tau)d_E(\log q_E)^2\)
& \(n+m+\tau\)
& \(N(m+\tau)\)
& \(n+m+\tau\)
& \makecell[c]{every\\query} \\
\hline

\textbf{qPRP + decoys}
& \makecell[c]{\(n+m+\tau+\)\\\(d_A \log q_A+d_E \log q_E\)}
& \(N\)
& \makecell[c]{\(n d_A(\log q_A)^2+\)\\\((m+\tau)d_E(\log q_E)^2\)}
& \(n+m+\tau\)
& \(\dfrac{1}{1-p_{\mathrm{dec}}}\dfrac{N(m+\tau)}{t}\)
& \(\dfrac{n+m+\tau}{1-p_{\mathrm{dec}}}\)
& \makecell[c]{every\\\(t\) queries} \\
\hline

\textbf{qOTP + decoys}
& \makecell[c]{\(n+m+\tau+\)\\\(d_E \log q_E\)}
& \(N\)
& \((m+\tau)d_E(\log q_E)^2\)
& \(n+m+\tau\)
& \(\dfrac{1}{1-p_{\mathrm{dec}}}N(m+\tau)\)
& \(\dfrac{n+m+\tau}{1-p_{\mathrm{dec}}}\)
& \makecell[c]{every\\query} \\
\hline\hline
\end{tabular}
\caption{Summary of resource costs of the protected QRAM variants. Here \(N=2^n\), \(t< O(N^{1/12})\) is the qPRP refresh period, and \(p_{\mathrm{dec}}\) is the decoy probability. All entries report asymptotic scaling up to constant factors and lower-order terms. The two-round protocols introduced in Sec.~\ref{sec:protocol} have the same asymptotic scaling as the one-round protocols. Communication entries are amortized per online query unless otherwise stated.}
\label{tab:scheme-summary}
\end{table*}

\subsection{Resource Model and Protocol Costs}
\label{subsec:resource-cost}

The address-hiding layer is instantiated by a seven-round Feistel qPRP over \(\{0,1\}^n\), while the encryption layer uses an independent seven-round Feistel qPRP over the \((m+\tau)\)-bit ciphertext space. The address-side implementation is parameterized by \((d_A,q_A)\), and the encryption-side implementation is parameterized by \((d_E,q_E)\). Under the reversible arithmetic convention in Appendix~\ref{subsec:detailed-resource-model}, coherent address permutation contributes the \(d_A,q_A\) terms in Table~\ref{tab:scheme-summary}, while coherent encryption and decryption contribute the \(d_E,q_E\) terms.

On the server side, the protected protocol preserves the standard bucket-brigade QRAM scaling: the server stores the protected layout using \(\Theta(N)\) QRAM resources, and one protected lookup has depth \(\Theta(n+m+\tau)\). The online quantum communication is therefore the protected address plus ciphertext bus. The qPRP-based protocol additionally requires coherent address permutation on the client, whereas the qOTP-based variant replaces this step with Pauli masking and removes the address-permutation term from the online client depth.

Table~\ref{tab:scheme-summary} summarizes the resulting online and amortized costs. The dominant distinction is not the server-side QRAM cost, which remains the same across variants, but the refresh amortization. A qPRP-protected layout can serve \(t<O(N^{1/12})\) protected query rounds before rerandomization, while a qOTP-shifted layout is single-use. Decoy variants preserve the same cost per executed round and only rescale amortized per-real-query communication by \(1/(1-p_{\mathrm{dec}})\). The full cost derivation is given in Appendix~\ref{subsec:detailed-protocol-cost}.

\subsection{Refresh and Synchronization Options}
\label{subsec:refresh-sync}

A refresh synchronizes the server-side protected layout with the client's current keys, randomness, and logical database contents. Refresh may be required because the application updates the database, or because the security epoch expires even when the database itself is unchanged. This is the main deployment tradeoff between the qPRP and qOTP variants: qPRP is preferable when many queries reuse the same database layout, since the rebuild cost is amortized over an epoch; qOTP is more attractive when the application already forces a fresh layout for each query, since its lighter online masking avoids coherent address qPRP evaluation.

Among all the database refreshing methods, the baseline is a full client rebuild. At the start of a new epoch, the client samples fresh protection material, re-encrypts all \(N\) blocks with fresh randomness, places each ciphertext at its new hidden physical location, and uploads the protected layout. This requires \(N(m+\tau)\) bits of classical upload per refresh, but introduces no additional helper, trust assumption, or cryptographic primitive beyond those already used by the protocol.

Other synchronization mechanisms can be substituted when deployment constraints make full client rebuild undesirable. MPC, FHE, trusted execution, or server-assisted oblivious shuffling may reduce client-side work or improve system flexibility, at the cost of additional trust or cryptographic assumptions~\cite{shriram2023ruffle,song2024secretshuffle,gentry2009fhe,chillotti2020tfhe,brakerski2014leveled,ohrimenko2014melbourne,patel2018cacheshuffle}. ORAM-style mechanisms are best viewed as sparse update layers rather than global refresh mechanisms~\cite{stefanov2013path,ren2015ring}: if only a small number of entries change while the current qPRP epoch remains valid, such a layer can hide which logical addresses were updated while moving less data than a full rebuild. If most entries must be rerandomized, or if a new security epoch requires fresh ciphertext randomness and a fresh hidden layout, full rebuild is usually more efficient. In this work, any such mechanism must instantiate its encryption, permutation, commitment, or authentication primitives with quantum-secure versions consistent with the OQRAM model. A more detailed comparison with helper-assisted and ORAM-style synchronization mechanisms is deferred to Appendix~\ref{subsec:layout-update-options} and summarized in Table~\ref{tab:layout-update-protocols}.

\subsection{Comparison with Blind Quantum Computing}
\label{subsec:comparison-ubqc}

Though one may use a fully blind quantum-computing approach such as UBQC to protect an entire delegated computation, applying it to QRAM would require blinding the full QRAM circuit or measurement pattern, resulting in much larger overhead for both computation and communication. For a QRAM lookup circuit with depth \(\Theta(\log(n+m))\) and size \(\Theta(N)\), a UBQC-style implementation requires the client to prepare and transmit \(\Theta(N)\) random single-qubit states and to exchange \(\Theta(N)\) classical measurement messages per query, with \(\Theta(\log(n+m))\) adaptive measurement rounds~\cite{broadbent2009universal}. OQRAM instead protects only the delegated quantum-query interface. Under the bucket-brigade model used here, the qPRP-based scheme keeps the server-side QRAM footprint at \(\Theta(N)\), uses only \(\Theta(n+m+\tau)\) online quantum communication per query, and amortizes refresh to \(\Theta(N(m+\tau)/t)\) classical bits per real query. Thus, OQRAM does not provide full UBQC-style blindness, but avoids the \(\Theta(N)\) per-query blind-computation communication cost when the target primitive is private delegated QRAM access.

%% file: conclusion.tex
\section{Conclusion}
\label{sec:conclusion}
This work presents oblivious QRAM, a protocol framework for securing delegated coherent quantum queries between a lightweight \(O(n+m+anc)\)-qubit client and an \(O(N)\)-scale QRAM server. The protocol combines coherent qPRP-based address relabeling with an encrypted and reshuffled database layout, hiding both the logical address state and plaintext database contents while preserving server-side QRAM lookup. Moreover, a qOTP-based variant provides information-theoretic single-query address hiding with per-query refresh, and hidden decoy queries add cheat-sensitive detection of malicious behavior. Oblivious QRAM therefore gives a query-specific alternative to fully blind quantum computation, while largely reducing quantum communication and client-side quantum resource requirements.

%% file: appendix.tex
\section{Detailed Protocol Descriptions}
\label{sec:detailed-protocol}

\subsection{Two-round query-use-unquery state evolution}
\label{subsec:two-round-detailed}

This appendix expands the two-round wrapper from Sec.~\ref{subsec:two-round}. The client starts with address register $A$, a clean ciphertext bus $B$, and a local algorithmic register $M$:
\[
\ket{\psi}_A \ket{0^{m+\tau}}_B \ket{b_i}_{M}
= \sum_i \alpha_i \ket{i}_A \ket{0^{m+\tau}}_B \ket{b_i}_{M}.
\]

\begin{enumerate}
    \item
    The client applies the same address protection map as in the one-round protocol to \(A\), producing
    \[
    \ket{\widetilde{\psi}^\star}_A
    =
    \sum_i \alpha_i (-1)^{z\cdot i}\ket{\pi^\star(i)}_A .
    \]
    The client sends \(A\) and \(B\) to the server, which applies the protected oracle \(O_{C^\star}\) and returns the registers. The resulting state is
    \begin{align*}
    &\sum_i \alpha_i (-1)^{z\cdot i}
    \ket{\pi^\star(i)}_A
    \ket{C^\star[\pi^\star(i)]}_B
    \ket{b_i}_{M} \\
    &\quad =
    \sum_i \alpha_i (-1)^{z\cdot i}
    \ket{\pi^\star(i)}_A
    \ket{C[i]}_B
    \ket{b_i}_{M},
    \end{align*}
    where the equality follows from the layout invariant
    \[
    C^\star[\pi^\star(i)]=C[i].
    \]

    \item
    The client applies the inverse address-protection map to recover the logical address labels,
    \[
    \sum_i \alpha_i \ket{i}_A \ket{C[i]}_B \ket{b_i}_{M}.
    \]
    The client then applies the inverse encryption permutation \((\pi'_{K_{\mathrm{enc}}})^{-1}\) coherently to the bus register \(B\), obtaining
    \[
    \sum_i \alpha_i \ket{i}_A \ket{D[i]\|r_i}_B \ket{b_i}_{M}.
    \]
    The \(\tau\)-qubit randomness suffix is kept, since the bus will later be re-encrypted and unqueried.

    \item
    The client applies bitwise CNOTs from the first \(m\) qubits of \(B\), which contain $D[i]$, into the local work register \(M\). This gives
    \[
    \sum_i \alpha_i \ket{i}_A \ket{D[i]\|r_i}_B \ket{b_i\oplus D[i]}_{M}.
    \]
    At this point, the desired work-register update has been performed, while the full expanded block remains available in \(B\) for uncomputation.

    \item
    The client reapplies \(\pi'_{K_{\mathrm{enc}}}\) to the bus register \(B\), recovering
    \[
    \sum_i \alpha_i \ket{i}_A \ket{C[i]}_B \ket{b_i\oplus D[i]}_{M}.
    \]
    The client then reapplies the same address protection map as before, so that the address register is again expressed in the protected basis expected by the server:
    \[
    \sum_i \alpha_i (-1)^{z\cdot i}
    \ket{\pi^\star(i)}_A
    \ket{C^\star[\pi^\star(i)]}_B
    \ket{b_i\oplus D[i]}_{M}.
    \]

    \item
    The client keeps \(M\) locally and sends \((A,B)\) back to the server. The server applies the same protected oracle once more. Because the bus already contains the same ciphertext block loaded in the first query, this second call erases the bus:
    \[
    \begin{aligned}
    &\sum_i \alpha_i (-1)^{z\cdot i}
    \ket{\pi^\star(i)}_A
    \ket{C^\star[\pi^\star(i)]}_B
    \ket{b_i\oplus D[i]}_{M} \\
    &\xmapsto{\,O_{C^\star}\,}
    \sum_i \alpha_i (-1)^{z\cdot i}
    \ket{\pi^\star(i)}_A
    \ket{0^{m+\tau}}_B
    \ket{b_i\oplus D[i]}_{M}.
    \end{aligned}
    \]
    The server then returns \((A,B)\) to the client.

    \item
    Finally, the client removes the address protection and obtains
    \[
    \sum_i \alpha_i \ket{i}_A \ket{0^{m+\tau}}_B \ket{b_i\oplus D[i]}_{M}.
    \]
\end{enumerate}

\subsection{qOTP-based single-query masking steps}
\label{subsec:qotp-detailed}

This appendix expands the qOTP-based variant from Sec.~\ref{subsec:qotp-based}. For each independent query, the client prepares a fresh shifted encrypted layout as follows.
\begin{enumerate}
    \item The client samples fresh qOTP address-mask strings
    \[
    \mathbf{x},\mathbf{z}\xleftarrow{\$}\{0,1\}^n,
    \]
    where \(\mathbf{x}\) is the computational-basis shift mask and \(\mathbf{z}\) is the phase mask.
    \item The client samples a fresh encryption key set \(K_{\mathrm{enc}}\).
    \item For each logical address $i\in\{0,1\}^n$, the client samples independent fresh block randomness \(r_i\xleftarrow{\$}\{0,1\}^\tau\).
    \item The client computes the ciphertext for each address:
    \[
    C[i]=\mathrm{Enc}_{K_{\mathrm{enc}}}(D[i];r_i)
    =
    \pi'_{K_{\mathrm{enc}}}(D[i]\|r_i).
    \]
    \item The client constructs the physical layout by shifting the encrypted logical database according to the qOTP \(X\)-mask:
    \[
    C^{\mathbf{x}}[j]=C[j\oplus \mathbf{x}].
    \]
    Equivalently, \(C^{\mathbf{x}}[i\oplus \mathbf{x}]=C[i]\) for every logical address \(i\).
    \item The client uploads \(C^\mathbf{x}\) to the server, and the server uses this fresh database layout for only the current independent QRAM query.
\end{enumerate}

The corresponding online query proceeds as follows.
\begin{enumerate}
    \item For an input address state \(\ket{\psi}_A=\sum_i \alpha_i \ket{i}_A\), the client applies the fresh qOTP mask:
    \[
        \ket{\widetilde{\psi}}_A
        =
        X^{\mathbf{x}}Z^{\mathbf{z}}\ket{\psi}_A
        =
        \sum_i \alpha_i (-1)^{\mathbf{z}\cdot i}
        \ket{i\oplus \mathbf{x}}_A.
    \]
    The client sends \(\ket{\widetilde{\psi}}_A\) to the server.
    \item The server evaluates QRAM over \(C^{\mathbf{x}}\), obtaining
    \[
    \begin{aligned}
        &\sum_i \alpha_i (-1)^{\mathbf{z}\cdot i}
        \ket{i\oplus\mathbf{x}}_A
        \ket{C^{\mathbf{x}}[i\oplus\mathbf{x}]}_B \\
        &\qquad =
        \sum_i \alpha_i (-1)^{\mathbf{z}\cdot i}
        \ket{i\oplus\mathbf{x}}_A
        \ket{C[i]}_B.
    \end{aligned}
    \]
    The server sends the address-ciphertext registers back to the client.
    \item The client applies the inverse qOTP mask \(Z^{\mathbf{z}}X^{\mathbf{x}}\), up to a global phase, and recovers the logical address labels:
    \[
    \begin{aligned}
    &Z^{\mathbf{z}}X^{\mathbf{x}}
    \left(
        \sum_i \alpha_i (-1)^{\mathbf{z}\cdot i}
        \ket{i\oplus\mathbf{x}}_A\ket{C[i]}_B
    \right) \\
    &\qquad =
    \sum_i \alpha_i (-1)^{\mathbf{z}\cdot i}
    (-1)^{\mathbf{z}\cdot i}
    \ket{i}_A\ket{C[i]}_B
    =
    \sum_i \alpha_i \ket{i}_A \ket{C[i]}_B.
    \end{aligned}
    \]
    The ciphertext recovery and plaintext-use steps then proceed exactly as in the qPRP-based protocol:
    \[
        (\pi'_{K_{\mathrm{enc}}})^{-1}\ket{C[i]}_B
        =
        \ket{D[i]\|r_i}_B.
    \]
\end{enumerate}

\section{Detailed Proof}\label{sec:detailed-proof}

\subsection{Honest-but-curious qPRP-based single query privacy}
\label{subsec:hbc-qprp-proof}
The proof proceeds by a sequence of hybrids.

\paragraph{Hybrid \(H_0\): Real execution.}
This is the real single-query protocol. The server stores the encrypted
permuted layout
\[
    C^\pi[j]
    =
    C[\pi_{K_{\mathrm{prp}}}^{-1}(j)]
\]
and receives the protected address register. For a logical query state
\[
    |\psi\rangle_A
    =
    \sum_{i\in\{0,1\}^n}\alpha_i |i\rangle_A,
\]
the address register sent to the server is, before QRAM lookup,
\[
    U_{\pi} Z^z |\psi\rangle_A
    =
    \sum_i \alpha_i (-1)^{z\cdot i}
    |\pi_{K_{\mathrm{prp}}}(i)\rangle_A,
\]
where \(z\leftarrow\{0,1\}^n\) is fresh if \(Z\)-padding is enabled. The server
also has access to the classical encrypted layout \(C^\pi\), but it does not
know \(K_{\mathrm{prp}}\), \(K_{\mathrm{enc}}\), or the encryption randomness
\(\{r_i\}\).

\paragraph{Hybrid \(H_1\): Replace the address qPRP by a random permutation.}
In \(H_1\), replace the keyed permutation
\[
    \pi_{K_{\mathrm{prp}}}
\]
by a uniformly random permutation
\[
    P:\{0,1\}^n\rightarrow\{0,1\}^n.
\]
The encrypted layout is permuted consistently with \(P\), and the online
address register is prepared using \(U_P\). By the computational address-hiding
lemma, any efficient server that distinguishes \(H_0\) from \(H_1\) with
advantage greater than
\[
    \varepsilon_{\mathsf{addr}}(q_{\mathsf{addr}},\lambda)
\]
would give a distinguisher against the strong qPRP security of
\(\pi_{K_{\mathrm{prp}}}\). Therefore,
\[
    H_0 \approx_{\varepsilon_{\mathsf{addr}}} H_1.
\]

\paragraph{Hybrid \(H_2\): Dephase the address register.}
In \(H_1\), the protected address state is
\[
    \sum_i \alpha_i (-1)^{z\cdot i}|P(i)\rangle.
\]
Averaging over fresh uniform \(z\), the server's address density matrix becomes
\[
    \mathbb{E}_{z}
    \left[
        U_P Z^z |\psi\rangle\langle\psi| Z^z U_P^\dagger
    \right].
\]
By the \(Z\)-padding dephasing lemma,
\[
    \mathbb{E}_{z}
    \left[
        Z^z |\psi\rangle\langle\psi| Z^z
    \right]
    =
    \sum_i |\alpha_i|^2 |i\rangle\langle i|.
\]
Therefore,
\[
    \mathbb{E}_{z}
    \left[
        U_P Z^z |\psi\rangle\langle\psi| Z^z U_P^\dagger
    \right]
    =
    \sum_i |\alpha_i|^2 |P(i)\rangle\langle P(i)|.
\]
Thus \(Z\)-padding removes the phase information in the query state. The
remaining address-side information is only the diagonal distribution
\[
    \{|\alpha_i|^2:i\in\{0,1\}^n\},
\]
but attached to physical labels through the hidden random permutation \(P\).

\paragraph{Hybrid \(H_3\): Hide the logical labels by the random permutation.}
Because \(P\) is uniformly random and independent of the query state, the
physical label \(P(i)\) is a uniformly random relabeling of the logical label
\(i\). Hence the server cannot associate a particular weight
\(|\alpha_i|^2\) with the corresponding logical address \(i\). The address
state in this hybrid has the form
\[
    \sum_i |\alpha_i|^2 |P(i)\rangle\langle P(i)|.
\]
Equivalently, it is a diagonal state whose eigenvalue multiset is
\[
    \{|\alpha_i|^2:i\in\{0,1\}^n\},
\]
but whose basis labels have been randomly permuted. Therefore, the simulator
only needs the relabeling-invariant dephased query distribution
\[
    \{|\alpha_i|^2:i\in\{0,1\}^n\},
\]
rather than the logical-to-physical correspondence. This is precisely the
address-related part of the leakage
\[
    \mathcal{L}
    =
    \left(
        N,\,
        m+\tau,\,
        1,\,
        \{|\alpha_i|^2:i\in\{0,1\}^n\}
    \right).
\]

\paragraph{Hybrid \(H_4\): Replace the encrypted database by dummy encryptions.}
Now replace the encrypted layout of the real database \(D\) by an encrypted
layout of the fixed dummy database
\[
    D^\star[i]=0^m
    \qquad
    \text{for all } i.
\]
That is, replace each real ciphertext
\[
    C[i]
    =
    \mathrm{Enc}_{K_{\mathrm{enc}}}(D[i];r_i)
\]
by
\[
    C^\star[i]
    =
    \mathrm{Enc}_{K_{\mathrm{enc}}}(0^m;r_i^\star),
\]
with fresh randomness \(r_i^\star\), and place the dummy ciphertexts according
to the same hidden physical permutation.

By database confidentiality, the real encrypted layout and the dummy encrypted
layout are computationally indistinguishable. More explicitly, define hybrids
\[
    G_0,G_1,\ldots,G_N,
\]
where \(G_\ell\) encrypts dummy blocks for the first \(\ell\) logical positions
and real database blocks for the remaining \(N-\ell\) positions. Adjacent
hybrids \(G_\ell\) and \(G_{\ell+1}\) differ in only one ciphertext. If an
efficient adversary distinguished adjacent hybrids with non-negligible
advantage, then one could build a qIND-qCPA adversary that embeds its challenge
ciphertext at the differing position and simulates all other ciphertexts
honestly. Hence each adjacent transition costs at most
\[
    \varepsilon_{\mathsf{enc}}(\lambda),
\]
and the full replacement costs at most
\[
    N\cdot \varepsilon_{\mathsf{enc}}(\lambda).
\]
Therefore,
\[
    H_3 \approx_{N\varepsilon_{\mathsf{enc}}} H_4.
\]

\paragraph{Hybrid \(H_5\): Ideal simulation.}
In \(H_4\), the server's view consists of:
\begin{enumerate}
    \item the public number of cells \(N\);
    \item the ciphertext block length \(m+\tau\);
    \item the fact that exactly one lookup occurred;
    \item the relabeling-invariant dephased query distribution
    \(\{|\alpha_i|^2:i\in\{0,1\}^n\}\);
    \item an encrypted dummy layout independent of the real database; and
    \item a randomly relabeled address register independent of the logical
    address labels.
\end{enumerate}
This view can be generated by a simulator given only
\[
    \mathcal{L}
    =
    \left(
        N,\,
        m+\tau,\,
        1,\,
        \{|\alpha_i|^2:i\in\{0,1\}^n\}
    \right).
\]
The simulator samples a random permutation \(P\), constructs an encrypted dummy
layout, and prepares a diagonal address state with eigenvalue multiset
\(\{|\alpha_i|^2\}\) assigned to physical labels according to \(P\). Because
the real database contents and the logical-to-physical address correspondence
have both been removed in the preceding hybrids, the simulated view is
identical to \(H_4\) up to the already accounted-for computational losses.

Combining the hybrid steps gives total distinguishing advantage at most
\[
    \varepsilon_{\mathsf{addr}}(q_{\mathsf{addr}},\lambda)
    +
    N\cdot \varepsilon_{\mathsf{enc}}(\lambda),
\]
as claimed.

\subsection{Honest-but-curious qOTP-based single query privacy}
\label{subsec:hbc-qotp-proof}
The proof proceeds by hybrids.

\paragraph{Hybrid \(H_0\): Real execution.}
This is the real qOTP-based protocol. The server receives the shifted encrypted
layout \(C^a\) and the qOTP-protected address register. For a query state
\[
    |\psi\rangle_A
    =
    \sum_i \alpha_i |i\rangle_A,
\]
the address register sent to the server is
\[
    X^aZ^b|\psi\rangle_A
    =
    \sum_i \alpha_i(-1)^{b\cdot i}|i\oplus a\rangle_A.
\]

\paragraph{Hybrid \(H_1\): Replace the protected address by maximally mixed.}
By Lemma~\ref{lem:qotp-address-hiding}, averaging over fresh uniform
\(a,b\) gives
\[
    \mathbb E_{a,b}
    \left[
        X^aZ^b|\psi\rangle\langle\psi|Z^bX^a
    \right]
    =
    \frac{I_A}{N}.
\]
Thus the address-side view is information-theoretically independent of the
logical query state, including its amplitudes and phases. Unlike the qPRP-based
variant, the ideal leakage does not need to include
\(\{|\alpha_i|^2\}\), because the fresh \(X\)-pad uniformly shifts the diagonal
distribution.

\paragraph{Hybrid \(H_2\): Replace the encrypted layout by dummy encryptions.}
Replace the shifted encrypted layout of the real database by a shifted encrypted
layout of a dummy database \(D^\star[i]=0^m\). As in
Lemma~\ref{lem:database-confidentiality}, use \(N\) hybrids, replacing one
encrypted block at a time. Each adjacent hybrid differs in one ciphertext and
is indistinguishable by qIND-qCPA security. Therefore the total loss is at most
\[
    N\cdot \varepsilon_{\mathsf{enc}}(\lambda).
\]

\paragraph{Hybrid \(H_3\): Ideal simulation.}
After the previous hybrids, the server's view consists only of a maximally
mixed address register, an encrypted dummy layout, and the public parameters
\(N\), \(m+\tau\), and the fact that one lookup occurred. A simulator given
\[
    \mathcal{L}_{\mathsf{qOTP}}
    =
    (N,m+\tau,1)
\]
can generate this view directly. Hence the real and ideal views are
indistinguishable with advantage at most
\(N\cdot\varepsilon_{\mathsf{enc}}(\lambda)\).

\begin{table*}[t]
\centering
\small
\renewcommand{\arraystretch}{1.15}
\setlength{\tabcolsep}{4.0pt}
\begin{tabular}{@{}C{2.5cm}C{2.5cm}C{3.5cm}C{4cm}C{3.5cm}@{}}
\hline\hline
\textbf{Mechanism}
& \textbf{Communication}
& \textbf{Client / helper memory}
& \textbf{Computation}
& \textbf{Best use case} \\
\hline\hline

\textbf{\makecell[c]{Client rebuild\\(baseline)}}
& \(\Theta(BN)\)
& Streaming to linear client memory
& Client: \(\Theta(N)\) encryptions and permutation evaluations; server: linear rewrite
& Clean theorem baseline \\
\hline

\textbf{MPC helper~\cite{shriram2023ruffle,song2024secretshuffle}}
& \(\Theta(BN)\) w/ preprocessing
& Distributed across helper parties
& Online cost can be linear; offline preprocessing remains substantial
& Weak client with strong external infrastructure \\
\hline

\textbf{HE / FHE helper~\cite{gentry2009fhe,chillotti2020tfhe,brakerski2014leveled}}
& Generic encrypted reshuffle \(\Theta(BN^2)\)
& Small client memory; heavy helper/server cryptographic state
& High helper/server homomorphic computation
& Specialized deployments with very weak clients \\
\hline

\textbf{\makecell[c]{Melbourne-style\\shuffle~\cite{ohrimenko2014melbourne}}}
& \(O(BN)\) per full reshuffle
& \(O(\sqrt{N}B)\) private client memory
& Linear orchestration and data movement
& ORAM-style reshuffle with moderate client scratch space \\
\hline

\textbf{Cache Shuffle-style shuffle~\cite{patel2018cacheshuffle}}
& \((4+\epsilon)NB\) for CacheShuffleRoot
& \(O(\sqrt{N}B)\) for CacheShuffleRoot
& Client-side oblivious shuffle with tunable bandwidth-memory tradeoff
& Bandwidth-sensitive ORAM-style deployment \\
\hline

\textbf{\makecell[c]{Classical\\ORAM layer~\cite{goldreich1996software,stefanov2013path,ren2015ring}}}
& \(O(B\cdot \mathrm{polylog}(N))\) per update
& Polylogarithmic to small client stash/position state
& Per-update reshuffling and re-encryption
& Sparse application-driven updates within a qPRP epoch \\
\hline\hline

\end{tabular}
\caption{Candidate mechanisms for refreshing or updating the server-side protected layout. \(B=m+\tau\) denotes the ciphertext block size.}
\label{tab:layout-update-protocols}
\end{table*}

\section{Detailed resource estimation and comparison}
\label{sec:detailed-resource-estimation}

\subsection{Primitive instantiation and accounting model}
\label{subsec:detailed-resource-model}
This appendix expands the resource model used in Sec.~\ref{subsec:resource-cost}. Throughout the protocol description in Sec.~\ref{sec:protocol}, the address permutation and encryption were treated as keyed coherent operations \(U^\pi\). For resource accounting, the address-hiding permutation is instantiated by a balanced seven-round Feistel qPRP over \(\{0,1\}^n\), with round functions implemented by a BPR-style ring-based qPRF~\cite{banerjee2012prf,banerjee2014spring,carolan2025compressed}. The encryption permutation \(\pi'_{K_{\mathrm{enc}}}\) is implemented independently over the \((m+\tau)\)-bit ciphertext space~\cite{lyubashevsky2013ringlwe,banerjee2012prf,banerjee2014spring}.

For the address-side implementation, let \(d_A\) and \(q_A\) denote the ring dimension and modulus, and set
\[
    b_A:=\lceil \log_2 q_A\rceil .
\]
Let \(D_{\mathrm{mul}}^{(A)}(b_A)\) be the depth of one reversible modular multiplication on \(b_A\)-bit operands. Under the conservative arithmetic model used here,
\[
    D_{\mathrm{mul}}^{(A)}(b_A)=\Theta(b_A^2)=\Theta((\log q_A)^2).
\]
The additional workspace and client-side depth for one coherent evaluation of \(U_\pi\) are denoted by \(Q_A\) and \(D_A\). For the above qPRP instantiation,
\[
    Q_A=\Theta\!\left(\frac{n}{2}+d_A\log q_A\right),
\]
and
\[
    D_A
    =\Theta\!\left(n\,d_A\,D_{\mathrm{mul}}^{(A)}(b_A)\right)
    =\Theta\!\left(n\,d_A\,(\log q_A)^2\right).
\]

The qPRP on the client's address qubits is implemented coherently under a standard parallel-depth convention for reversible arithmetic circuits. In particular, the depth estimates allow multi-qubit \texttt{TOFFOLI}, multi-qubit \texttt{FANOUT}, and mid-circuit measurement with feedforward. This is an accounting convention for circuit depth and is separate from the cryptographic assumptions used for qPRP and encryption security.

For the encryption-side implementation, let \(d_E\) and \(q_E\) denote the ring dimension and modulus, set
\[
    b_E:=\lceil \log_2 q_E\rceil ,
\]
and define \(D_{\mathrm{mul}}^{(E)}(b_E)\) analogously. Under the same multiplication model,
\[
    D_{\mathrm{mul}}^{(E)}(b_E)=\Theta((\log q_E)^2).
\]
The additional workspace and client-side depth for one coherent evaluation of \(U_{\pi'_{K_{\mathrm{enc}}}}\) or \(U_{\pi'_{K_{\mathrm{enc}}}}^{-1}\) are denoted by \(Q_E\) and \(D_E\). The accounting uses
\[
    Q_E=\Theta\!\left(\frac{m+\tau}{2}+d_E\log q_E\right),
\]
and
\[
    D_E
    =\Theta\!\left((m+\tau)\,d_E\,D_{\mathrm{mul}}^{(E)}(b_E)\right)
    =\Theta\!\left((m+\tau)\,d_E\,(\log q_E)^2\right).
\]

For the server-side QRAM, the bucket-brigade QRAM model is used as the baseline implementation~\cite{giovannetti2008quantum}. For an \(n\)-qubit address register and an \(m\)-qubit data bus, a bucket-brigade query uses \(\Theta(N)\) server-side qubit overhead and has query depth \(\Theta(n+m)\). In the protected protocol, the server operates on an \(n\)-qubit protected address register and an \((m+\tau)\)-qubit ciphertext bus. Therefore, for Table~\ref{tab:scheme-summary}, the costs are
\[
    Q_{\mathrm{QRAM}}=\Theta(N),
    \qquad
    D_{\mathrm{QRAM}}=\Theta(n+m+\tau).
\]

\subsection{Derivation of the protocol-cost table}
\label{subsec:detailed-protocol-cost}
This subsection expands the accounting behind Table~\ref{tab:scheme-summary}. The table substitutes the primitive-level quantities \(D_A,Q_A,D_E,Q_E\) with their asymptotic scaling and suppresses constant factors and lower-order terms.

For the qPRP-based protocol, the client coherently evaluates the address permutation before sending the query, applies the inverse address permutation after receiving the server response, and then coherently inverts the encryption permutation on the ciphertext bus. Up to constant factors from the forward and inverse address maps, the online client depth is
\[
    \Theta(D_A+D_E),
\]
and the client workspace is
\[
    \Theta(Q_A+Q_E+n+m+\tau).
\]
Substituting the primitive costs from Appendix~\ref{subsec:detailed-resource-model} gives the qPRP row of Table~\ref{tab:scheme-summary}.

For the qOTP-based protocol, Pauli \(X\)- and \(Z\)-masks replace coherent address-permutation evaluation. These masks contribute only \(O(n)\) single-qubit gates, so the online client depth is dominated by ciphertext recovery:
\[
    \Theta(D_E).
\]
The client workspace is
\[
    \Theta(Q_E+n+m+\tau).
\]
Again substituting the encryption primitive cost gives the qOTP row of Table~\ref{tab:scheme-summary}.

In both protocols, a protected online query sends the \(n\)-qubit address register and the \((m+\tau)\)-qubit ciphertext bus between the client and server, so the quantum communication per protected query is
\[
    \Theta(n+m+\tau).
\]
The server's online cost is the protected QRAM lookup, with qubit overhead \(\Theta(N)\) and depth \(\Theta(n+m+\tau)\).

The classical communication differs because the two layouts have different reuse properties. In the qPRP-based protocol, one encrypted and shuffled layout can be reused for \(t\) protected query rounds, so a full \(N\)-block layout upload contributes amortized classical communication
\[
    \Theta\!\left(\frac{N(m+\tau)}{t}\right)
\]
per real query. In the qOTP-based protocol, the shifted layout must be refreshed after every independent logical query, giving
\[
    \Theta(N(m+\tau))
\]
classical communication per real query.

Finally, decoy queries only affect amortization. Under independent decoy sampling with probability \(p_{\mathrm{dec}}\), only a \(1-p_{\mathrm{dec}}\) fraction of executed rounds are real queries in expectation. Therefore, costs proportional to executed rounds are multiplied by \(1/(1-p_{\mathrm{dec}})\) when reported per real query. This factor applies to online latency, server-side QRAM latency, quantum communication, and refresh communication.

\subsection{Refresh and synchronization mechanism comparison}
\label{subsec:layout-update-options}
Table~\ref{tab:layout-update-protocols} introduces different refresh and synchronization mechanisms and their best use cases.

One can also use background refresh to further reduce practical latency by preparing the next protected layout while the current epoch is still serving queries. This does not change the total communication cost: each online query must still be encoded and decoded using the matching layout, address key, encryption key, and query mask.

%% file: refs.bib
@article{broadbent2015delegating,
  author        = {Broadbent, Anne},
  title         = {Delegating Private Quantum Computations},
  journal       = {Canadian Journal of Physics},
  volume        = {93},
  number        = {9},
  pages         = {941--946},
  year          = {2015},
  doi           = {10.1139/cjp-2015-0030}
}

@article{giovannetti2008architectures,
  title={Architectures for a quantum random access memory},
  author={Giovannetti, Vittorio and Lloyd, Seth and Maccone, Lorenzo},
  journal={Physical Review A},
  volume={78},
  number={5},
  pages={052310},
  year={2008},
  publisher={APS}
}

@article{giovannetti2008quantum,
  title={Quantum random access memory},
  author={Giovannetti, Vittorio and Lloyd, Seth and Maccone, Lorenzo},
  journal={Physical review letters},
  volume={100},
  number={16},
  pages={160501},
  year={2008},
  publisher={APS}
}

@article{giovannetti2008privatequery,
  title={Quantum private queries},
  author={Giovannetti, Vittorio and Lloyd, Seth and Maccone, Lorenzo},
  journal={Physical Review Letters},
  volume={100},
  number={23},
  pages={230502},
  year={2008},
  publisher={APS}
}

@inproceedings{broadbent2009universal,
  title={Universal blind quantum computation},
  author={Broadbent, Anne and Fitzsimons, Joseph and Kashefi, Elham},
  booktitle={2009 50th Annual IEEE Symposium on Foundations of Computer Science},
  pages={517--526},
  year={2009},
  organization={IEEE}
}

@article{goldreich1996software,
  title={Software protection and simulation on oblivious RAMs},
  author={Goldreich, Oded and Ostrovsky, Rafail},
  journal={Journal of the ACM},
  volume={43},
  number={3},
  pages={431--473},
  year={1996},
  publisher={ACM}
}

@inproceedings{asharov2020optorama,
  title={OptORAMa: optimal oblivious RAM},
  author={Asharov, Gilad and Komargodski, Ilan and Lin, Wei-Kai and Nayak, Kartik and Shi, Elaine},
  booktitle={Advances in Cryptology -- EUROCRYPT 2020},
  pages={221--251},
  year={2020},
  organization={Springer}
}

@misc{tople2018prooram,
  title={PRO-ORAM: Constant latency read-only oblivious RAM},
  author={Tople, Shruti and Jia, Yaoqi and Saxena, Prateek},
  year={2018},
  note={Cryptology ePrint Archive}
}

@inproceedings{grover1996fast,
  title={A fast quantum mechanical algorithm for database search},
  author={Grover, Lov K},
  booktitle={Proceedings of the twenty-eighth annual ACM symposium on Theory of computing},
  pages={212--219},
  year={1996}
}

@book{nielsen2010quantum,
  title={Quantum computation and quantum information},
  author={Nielsen, Michael A and Chuang, Isaac L},
  year={2010},
  publisher={Cambridge university press}
}

@article{weiss2024quantum,
  title={Quantum Random Access Memory Architectures Using 3D Superconducting Cavities},
  author={Weiss, DK and Puri, Shruti and Girvin, SM},
  journal={PRX Quantum},
  volume={5},
  number={2},
  pages={020312},
  year={2024},
  publisher={APS}
}

@inproceedings{xu2023systems,
  title={Systems architecture for quantum random access memory},
  author={Xu, Shifan and Hann, Connor T and Foxman, Ben and Girvin, Steven M and Ding, Yongshan},
  booktitle={Proceedings of the 56th Annual IEEE/ACM International Symposium on Microarchitecture},
  pages={526--538},
  year={2023}
}

@article{carolan2025compressed,
  title={Compressed permutation oracles},
  author={Carolan, Joseph},
  journal={arXiv preprint arXiv:2509.18586},
  year={2025}
}

@article{zhandry2025note,
  title={A note on quantum-secure PRPs},
  author={Zhandry, Mark},
  journal={Quantum},
  volume={9},
  pages={1696},
  year={2025},
  publisher={Verein zur F{\"o}rderung des Open Access Publizierens in den Quantenwissenschaften}
}

@inproceedings{gagliardoni2016semantic,
  title={Semantic security and indistinguishability in the quantum world},
  author={Gagliardoni, Tommaso and H{\"u}lsing, Andreas and Schaffner, Christian},
  booktitle={Annual international cryptology conference},
  pages={60--89},
  year={2016},
  organization={Springer}
}

@inproceedings{stefanov2013path,
  author    = {Emil Stefanov and Marten van Dijk and Elaine Shi and Christopher W. Fletcher and Ling Ren and Xiangyao Yu and Srinivas Devadas},
  title     = {Path {ORAM}: An Extremely Simple Oblivious {RAM} Protocol},
  booktitle = {Proceedings of the 2013 ACM SIGSAC Conference on Computer and Communications Security},
  pages     = {299--310},
  year      = {2013},
  doi       = {10.1145/2508859.2516660}
}

@inproceedings{islam2012access,
  author    = {Mohammad Saiful Islam and Mehmet Kuzu and Murat Kantarcioglu},
  title     = {Access Pattern Disclosure on Searchable Encryption: Ramification, Attack and Mitigation},
  booktitle = {Proceedings of the Network and Distributed System Security Symposium},
  year      = {2012}
}

@inproceedings{kellaris2016generic,
  author    = {Georgios Kellaris and George Kollios and Kobbi Nissim and Adam O'Neill},
  title     = {Generic Attacks on Secure Outsourced Databases},
  booktitle = {Proceedings of the 2016 ACM SIGSAC Conference on Computer and Communications Security},
  pages     = {1329--1340},
  year      = {2016},
  doi       = {10.1145/2976749.2978386}
}

@article{childs2005secure,
  author  = {Andrew M. Childs},
  title   = {Secure Assisted Quantum Computation},
  journal = {Quantum Information and Computation},
  volume  = {5},
  number  = {6},
  pages   = {456--466},
  year    = {2005}
}

@article{fitzsimons2017private,
  author  = {Joseph F. Fitzsimons},
  title   = {Private Quantum Computation: An Introduction to Blind Quantum Computing and Related Protocols},
  journal = {npj Quantum Information},
  volume  = {3},
  number  = {1},
  pages   = {23},
  year    = {2017},
  doi     = {10.1038/s41534-017-0025-3}
}

@article{beals2001quantum,
  author  = {Robert Beals and Harry Buhrman and Richard Cleve and Michele Mosca and Ronald de Wolf},
  title   = {Quantum Lower Bounds by Polynomials},
  journal = {Journal of the ACM},
  volume  = {48},
  number  = {4},
  pages   = {778--797},
  year    = {2001},
  doi     = {10.1145/502090.502097}
}

@article{simon1997power,
  author  = {Daniel R. Simon},
  title   = {On the Power of Quantum Computation},
  journal = {SIAM Journal on Computing},
  volume  = {26},
  number  = {5},
  pages   = {1474--1483},
  year    = {1997},
  doi     = {10.1137/S0097539796298637}
}

@inproceedings{childs2003exponential,
  author    = {Andrew M. Childs and Richard Cleve and Enrico Deotto and Edward Farhi and Sam Gutmann and Daniel A. Spielman},
  title     = {Exponential Algorithmic Speedup by a Quantum Walk},
  booktitle = {Proceedings of the Thirty-Fifth Annual ACM Symposium on Theory of Computing},
  pages     = {59--68},
  year      = {2003},
  doi       = {10.1145/780542.780552}
}

@inproceedings{xu2025fattree,
  author    = {Xu, Shifan and Lu, Alvin and Ding, Yongshan},
  title     = {Fat-Tree {QRAM}: A High-Bandwidth Shared Quantum Random Access Memory for Parallel Queries},
  booktitle = {Proceedings of the 30th ACM International Conference on Architectural Support for Programming Languages and Operating Systems, Volume 2},
  series    = {ASPLOS '25},
  year      = {2025},
  pages     = {74--90},
  numpages  = {17},
  location  = {Rotterdam, Netherlands},
  publisher = {Association for Computing Machinery},
  address   = {New York, NY, USA},
  doi       = {10.1145/3676641.3716256}
}

@article{weiss2024faulty,
  author        = {D. K. Weiss and Shifan Xu and Shruti Puri and Yongshan Ding and S. M. Girvin},
  title         = {Faulty Towers: Recovering a Functioning Quantum Random Access Memory in the Presence of Defective Routers},
  journal       = {arXiv preprint arXiv:2411.15612},
  year          = {2024},
  eprint        = {2411.15612},
  archivePrefix = {arXiv},
  primaryClass  = {quant-ph}
}

@article{wang2025transmon,
  author  = {Zhaoyou Wang and Hong Qiao and Andrew N. Cleland and Liang Jiang},
  title   = {Quantum Random Access Memory with Transmon-Controlled Phonon Routing},
  journal = {Physical Review Letters},
  volume  = {134},
  number  = {21},
  pages   = {210601},
  year    = {2025},
  doi     = {10.1103/PhysRevLett.134.210601}
}

@inproceedings{goldreich1987oram,
  author    = {Oded Goldreich},
  title     = {Towards a Theory of Software Protection and Simulation by Oblivious {RAM}s},
  booktitle = {Proceedings of the Nineteenth Annual ACM Symposium on Theory of Computing},
  pages     = {182--194},
  year      = {1987},
  doi       = {10.1145/28395.28416}
}

@article{chor1998pir,
  author  = {Benny Chor and Oded Goldreich and Eyal Kushilevitz and Madhu Sudan},
  title   = {Private Information Retrieval},
  journal = {Journal of the ACM},
  volume  = {45},
  number  = {6},
  pages   = {965--981},
  year    = {1998},
  doi     = {10.1145/293347.293350}
}

@article{jakobi2011private,
  author  = {Markus Jakobi and Christoph Simon and Nicolas Gisin and Cyril Branciard and Jean-Daniel Bancal and Nino Walenta and Hugo Zbinden},
  title   = {Practical Private Database Queries Based on a Quantum-Key-Distribution Protocol},
  journal = {Physical Review A},
  volume  = {83},
  number  = {2},
  pages   = {022301},
  year    = {2011},
  doi     = {10.1103/PhysRevA.83.022301}
}

@article{li2023robust,
  author  = {Li, Zihao and Zhu, Huangjun and Hayashi, Masahito},
  title   = {Robust and Efficient Verification of Graph States in Blind Measurement-Based Quantum Computation},
  journal = {npj Quantum Information},
  volume  = {9},
  pages   = {115},
  year    = {2023},
  doi     = {10.1038/s41534-023-00783-9}
}

@article{bourdoncle2025practical,
  author  = {Bourdoncle, Boris and Emeriau, Pierre-Emmanuel and Hilaire, Paul and Mansfield, Shane and Music, Luka and Wein, Stephen},
  title   = {Towards Practical Secure Delegated Quantum Computing with Semi-Classical Light},
  journal = {Quantum},
  volume  = {9},
  pages   = {1943},
  year    = {2025},
  doi     = {10.22331/q-2025-12-12-1943}
}

@inproceedings{ren2015ring,
  author    = {Ren, Ling and Fletcher, Christopher W. and Kwon, Albert and Stefanov, Emil and Shi, Elaine and van Dijk, Marten and Devadas, Srinivas},
  title     = {Constants Count: Practical Improvements to Oblivious {RAM}},
  booktitle = {Proceedings of the 24th USENIX Security Symposium},
  pages     = {415--430},
  year      = {2015}
}

@inproceedings{mishra2018oblix,
  author    = {Mishra, Pratyush and Poddar, Rishabh and Chen, Jerry and Chiesa, Alessandro and Popa, Raluca Ada},
  title     = {Oblix: An Efficient Oblivious Search Index},
  booktitle = {Proceedings of the 2018 IEEE Symposium on Security and Privacy},
  pages     = {279--296},
  year      = {2018},
  doi       = {10.1109/SP.2018.00045}
}

@inproceedings{sasy2018zerotrace,
  author    = {Sasy, Sajin and Gorbunov, Sergey and Fletcher, Christopher W.},
  title     = {{ZeroTrace}: Oblivious Memory Primitives from {Intel SGX}},
  booktitle = {Proceedings of the Network and Distributed System Security Symposium},
  year      = {2018}
}

@article{kerenidis2004qspir,
  author  = {Kerenidis, Iordanis and de Wolf, Ronald},
  title   = {Quantum Symmetrically-Private Information Retrieval},
  journal = {Information Processing Letters},
  volume  = {90},
  number  = {3},
  pages   = {109--114},
  year    = {2004},
  doi     = {10.1016/j.ipl.2004.02.003}
}

@inproceedings{song2000encryptedsearch,
  author    = {Song, Dawn Xiaodong and Wagner, David and Perrig, Adrian},
  title     = {Practical Techniques for Searches on Encrypted Data},
  booktitle = {Proceedings of the 2000 IEEE Symposium on Security and Privacy},
  pages     = {44--55},
  year      = {2000},
  doi       = {10.1109/SECPRI.2000.848445}
}

@inproceedings{curtmola2006sse,
  author    = {Curtmola, Reza and Garay, Juan and Kamara, Seny and Ostrovsky, Rafail},
  title     = {Searchable Symmetric Encryption: Improved Definitions and Efficient Constructions},
  booktitle = {Proceedings of the 13th ACM Conference on Computer and Communications Security},
  pages     = {79--88},
  year      = {2006},
  doi       = {10.1145/1180405.1180417}
}

@inproceedings{boneh2013quantum,
  author    = {Boneh, Dan and Zhandry, Mark},
  title     = {Secure Signatures and Chosen Ciphertext Security in a Quantum Computing World},
  booktitle = {Advances in Cryptology -- CRYPTO 2013},
  series    = {Lecture Notes in Computer Science},
  volume    = {8043},
  pages     = {361--379},
  year      = {2013},
  publisher = {Springer}
}

@article{luby1988prp,
  author  = {Luby, Michael and Rackoff, Charles},
  title   = {How to Construct Pseudorandom Permutations from Pseudorandom Functions},
  journal = {SIAM Journal on Computing},
  volume  = {17},
  number  = {2},
  pages   = {373--386},
  year    = {1988},
  doi     = {10.1137/0217022}
}

@inproceedings{ambainis2000private,
  author    = {Ambainis, Andris and Mosca, Michele and Tapp, Alain and de Wolf, Ronald},
  title     = {Private Quantum Channels},
  booktitle = {Proceedings of the 41st Annual IEEE Symposium on Foundations of Computer Science},
  pages     = {547--553},
  year      = {2000},
  doi       = {10.1109/SFCS.2000.892142}
}

@article{shriram2023ruffle,
  author  = {Shriram A, Pranav and Koti, Nishat and Kukkala, Varsha Bhat and Patra, Arpita and Gopal, Bhavish Raj and Sangal, Somya},
  title   = {Ruffle: Rapid 3-Party Shuffle Protocols},
  journal = {Proceedings on Privacy Enhancing Technologies},
  volume  = {2023},
  number  = {3},
  pages   = {24--42},
  year    = {2023},
  doi     = {10.56553/popets-2023-0068}
}

@inproceedings{song2024secretshuffle,
  author    = {Song, Xiangfu and Yin, Dong and Bai, Jianli and Dong, Changyu and Chang, Ee-Chien},
  title     = {Secret-Shared Shuffle with Malicious Security},
  booktitle = {Proceedings of the Network and Distributed System Security Symposium},
  year      = {2024},
  doi       = {10.14722/ndss.2024.24021}
}

@inproceedings{gentry2009fhe,
  author    = {Gentry, Craig},
  title     = {Fully Homomorphic Encryption Using Ideal Lattices},
  booktitle = {Proceedings of the 41st Annual ACM Symposium on Theory of Computing},
  pages     = {169--178},
  year      = {2009},
  doi       = {10.1145/1536414.1536440}
}

@inproceedings{ohrimenko2014melbourne,
  author    = {Ohrimenko, Olga and Goodrich, Michael T. and Tamassia, Roberto and Upfal, Eli},
  title     = {The Melbourne Shuffle: Improving Oblivious Storage in the Cloud},
  booktitle = {Automata, Languages, and Programming},
  series    = {Lecture Notes in Computer Science},
  volume    = {8573},
  pages     = {556--567},
  year      = {2014},
  publisher = {Springer},
  doi       = {10.1007/978-3-662-43951-7_47}
}

@inproceedings{patel2018cacheshuffle,
  author    = {Patel, Sarvar and Persiano, Giuseppe and Yeo, Kevin},
  title     = {{CacheShuffle}: A Family of Oblivious Shuffles},
  booktitle = {45th International Colloquium on Automata, Languages, and Programming},
  series    = {Leibniz International Proceedings in Informatics},
  volume    = {107},
  pages     = {161:1--161:13},
  year      = {2018},
  publisher = {Schloss Dagstuhl -- Leibniz-Zentrum f{\"u}r Informatik},
  doi       = {10.4230/LIPIcs.ICALP.2018.161}
}

@inproceedings{banerjee2014spring,
  author    = {Banerjee, Abhishek and Brenner, Hai and Leurent, Ga{\"e}tan and Peikert, Chris and Rosen, Alon},
  title     = {{SPRING}: Fast Pseudorandom Functions from Rounded Ring Products},
  booktitle = {Fast Software Encryption},
  series    = {Lecture Notes in Computer Science},
  volume    = {8424},
  pages     = {38--57},
  year      = {2014},
  publisher = {Springer},
  doi       = {10.1007/978-3-662-46706-0_3}
}

@inproceedings{banerjee2012prf,
  author    = {Banerjee, Abhishek and Peikert, Chris and Rosen, Alon},
  title     = {Pseudorandom Functions and Lattices},
  booktitle = {Advances in Cryptology -- EUROCRYPT 2012},
  series    = {Lecture Notes in Computer Science},
  volume    = {7237},
  pages     = {719--737},
  year      = {2012},
  publisher = {Springer},
  doi       = {10.1007/978-3-642-29011-4_42}
}

@article{lyubashevsky2013ringlwe,
  author  = {Lyubashevsky, Vadim and Peikert, Chris and Regev, Oded},
  title   = {On Ideal Lattices and Learning with Errors over Rings},
  journal = {Journal of the ACM},
  volume  = {60},
  number  = {6},
  articleno = {43},
  pages   = {43:1--43:35},
  year    = {2013},
  doi     = {10.1145/2535925}
}

@article{chillotti2020tfhe,
  title={TFHE: Fast Fully Homomorphic Encryption Over the Torus: I. Chillotti et al.},
  author={Chillotti, Ilaria and Gama, Nicolas and Georgieva, Mariya and Izabach{\`e}ne, Malika},
  journal={Journal of Cryptology},
  volume={33},
  number={1},
  pages={34--91},
  year={2020},
  publisher={Springer}
}

@article{brakerski2014leveled,
  title={(Leveled) fully homomorphic encryption without bootstrapping},
  author={Brakerski, Zvika and Gentry, Craig and Vaikuntanathan, Vinod},
  journal={ACM Transactions on Computation Theory (TOCT)},
  volume={6},
  number={3},
  pages={1--36},
  year={2014},
  publisher={ACM New York, NY, USA}
}

@inproceedings{haah2016sample,
  title={Sample-optimal tomography of quantum states},
  author={Haah, Jeongwan and Harrow, Aram W and Ji, Zhengfeng and Wu, Xiaodi and Yu, Nengkun},
  booktitle={Proceedings of the forty-eighth annual ACM symposium on Theory of Computing},
  pages={913--925},
  year={2016}
}

@inproceedings{o2016efficient,
  title={Efficient quantum tomography},
  author={O'Donnell, Ryan and Wright, John},
  booktitle={Proceedings of the forty-eighth annual ACM symposium on Theory of Computing},
  pages={899--912},
  year={2016}
}

@inproceedings{gagliardoni2017orams,
  title={ORAMs in a quantum world},
  author={Gagliardoni, Tommaso and Karvelas, Nikolaos P and Katzenbeisser, Stefan},
  booktitle={International Workshop on Post-Quantum Cryptography},
  pages={406--425},
  year={2017},
  organization={Springer}
}
